\newcommand{\Mac}[2]{\mathcal{M}_{#1,#2}}
\def\LT{\operatorname{LT}}
\def\HF{\operatorname{HF}}
\newtheorem{theorem}{Theorem}
\newtheorem{proposition}[theorem]{Proposition}
\newtheorem{lemma}[theorem]{Lemma}
\newtheorem{corollary}[theorem]{Corollary}
\newtheorem{example}{Example}
\newtheorem{definition}{Definition}
\newcounter{algo}
\newcommand\ligne[2][ ]{\refstepcounter{algo}{\tiny \arabic{algo}. }#1 \hspace*{#2mm}}
\begin{document}
\title{On the Complexity of the $F_5$ Gröbner basis Algorithm}

\author[1]{Magali Bardet} 
\author[2,3,4]{Jean-Charles Faug\`ere}
\author[5,6,7]{Bruno Salvy}
\affil[1]{\'Equipe C\&A, LITIS, Universit{\'e} de Rouen}
\affil[2]{UPMC, Univ Paris 06, LIP6}
\affil[3]{CNRS, UMR 7606, LIP6}
\affil[4]{PolSys Project, Inria}
\affil[5]{AriC Project, Inria}
\affil[6]{LIP, ENS de Lyon}
\affil[7]{CNRS, UCBL, Université de Lyon.}

\maketitle
\begin{abstract}
  We study the complexity of Gr\"obner bases computation, in
  particular in the generic situation where the variables are in
  simultaneous Noether position with respect to the system.

We give a bound on the number of polynomials of degree $d$ in a
Gröbner basis computed by Faugère's $F_5$
algorithm~(\cite{Faugere02}) in this generic case for the
grevlex ordering (which is also a bound on the number of polynomials
for a reduced Gröbner basis, independently of the algorithm used). 
Next, we analyse more precisely the structure of the polynomials in 
the Gr\"obner bases with signatures that $F_5$ computes 
and use it to bound the complexity of the algorithm.

Our estimates show that the version of~$F_5$ we analyse, which uses only standard Gaussian elimination techniques, outperforms row reduction of the Macaulay matrix with the best known algorithms for moderate degrees, and even for degrees up to the thousands if Strassen's multiplication is used. The degree being fixed, the factor of improvement grows exponentially with the number of variables.
\end{abstract}

\textit{Keywords:}
Gr\"obner bases, $F_5$ algorithm, Complexity, Regular Sequences, Noether Position

\section*{Introduction}
The complexity of Gr\"obner bases has been the object of extensive
studies. It is well-known that in the worst-case, the complexity is
doubly exponential in the number of variables. This is the result of a
series of works both on lower bounds by~\cite{MaMe82,Huynh86} and on
upper bounds, first in characteristic~0 by~\cite{Giusti84,MoMo84} and
then in positive characteristic by~\cite{Dube90}.

These worst-case estimates have led to the unfortunately widespread
belief that Gr\"obner bases are not a useful tool beyond toy
examples. However, it has been observed for a long time that the
actual behaviour of Gr\"obner bases implementations can be quite
efficient. For instance, the matrix-$F_5$ algorithm that we analyse
in this article, itself a downgraded version of
Faugère's~$F_5$ algorithm (\cite{Faugere02}) and a
particular case of~\cite{FaugereRahmany2009}, has
given surprisingly good results on a cryptographic challenge (see
\cite{FaJo03} where a set of $80$ dense polynomials in $80$ variables
was solved by this algorithm). This motivates an investigation of the
complexity of Gr\"obner basis algorithms for useful special classes of
polynomial systems.

In this article, we concentrate on the important case of homogeneous
systems. Any system can be brought into this form by adding a variable and homogenizing.
It is classical that the computation of Gr\"obner bases can
be performed by linear algebra on a large matrix that has been
described precisely by~\cite{Macaulay02}. The explicit relation with
Gr\"obner bases can be found in the works
of~\cite{Lazard83} and~\cite{Giusti84,Giusti85}. From there, a simple statement
of a complexity bound is the following.
\begin{proposition}\label{propupper} Let $(f_1,\dots,f_m)$ be a system
  of homogeneous polynomials in $k[x_1,\dots,x_n]$ with $k$ an
  arbitrary field. The number of operations in $k$ required to compute
  a Gr\"obner basis of the ideal~$\mathcal{I}$ generated
  by~$(f_1,\dots,f_m)$ for a graded monomial ordering {\em up to
    degree $D$} is bounded by
\[O\left(mD\,\binom{n+D-1}{D}^{\!\!\omega}\right), \mbox{ as } D \to \infty\]
where $\omega$ is the exponent of matrix multiplication over~$k$.
\end{proposition}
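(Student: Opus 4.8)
The plan is to carry out the Gröbner basis computation degree by degree, as a sequence of Gaussian eliminations on Macaulay matrices, following the linear-algebra viewpoint of~\cite{Macaulay02,Lazard83,Giusti84}. Fix the graded monomial order. For each degree $d$ with $1 \le d \le D$, let $\mathcal{M}_d$ be the matrix whose columns are indexed by the monomials of degree $d$ in $k[x_1,\dots,x_n]$, sorted decreasingly for the order, and whose rows are the coefficient vectors of the polynomials $x^{\alpha}f_i$ for $1 \le i \le m$ and $|\alpha| = d - \deg f_i$. Since $\mathcal I$ is homogeneous and generated by $f_1,\dots,f_m$, its degree-$d$ component is $\mathcal I_d = \sum_i k[x_1,\dots,x_n]_{d-\deg f_i}\,f_i$, so the rows of $\mathcal{M}_d$ span $\mathcal I_d$ as a $k$-vector space.

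Next I would compute a row echelon form of $\mathcal{M}_d$ for that column ordering. Its nonzero rows form a $k$-basis of $\mathcal I_d$ with pairwise distinct leading monomials; since the leading monomial of any nonzero $k$-linear combination of them is again one of these monomials, their set is exactly the degree-$d$ part of $\LT(\mathcal I)$. Collecting the corresponding polynomials over all $d = 1,\dots,D$ therefore yields a finite set $G \subseteq \mathcal I$ of polynomials of degree at most $D$ such that $\langle\LT(G)\rangle$ and $\langle\LT(\mathcal I)\rangle$ agree in every degree $\le D$; that is, $G$ is a Gröbner basis of $\mathcal I$ up to degree $D$ (Lazard's theorem, \cite{Lazard83,Giusti84}). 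Hence the whole task is just these $D$ echelon-form computations.

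For the cost, $\mathcal{M}_d$ has exactly $N_d := \binom{n+d-1}{d}$ columns, and its number of rows is $\sum_i\binom{n+d-\deg f_i-1}{n-1}\le m N_d$ since each summand is at most $\binom{n+d-1}{n-1} = N_d$. The row echelon form of an $a\times b$ matrix over a field can be computed in $O(\max(a,b)\,\min(a,b)^{\,\omega-1})$ field operations by block recursive Gaussian elimination, so reducing $\mathcal{M}_d$ costs $O(m N_d\cdot N_d^{\,\omega-1}) = O(m N_d^{\,\omega})$. Summing over $d$ and using $N_d\le N_D$ for $d\le D$, the total is $\sum_{d=1}^D O(m N_d^{\,\omega}) = O(m D\,N_D^{\,\omega}) = O(m D\,\binom{n+D-1}{D}^{\!\!\omega})$, which is the claimed bound.

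The arithmetic above is routine; the only genuinely non-formal ingredient is the correctness claim --- that concatenating the degree-by-degree echelon forms gives a set whose leading terms generate $\LT(\mathcal I)$ in all degrees $\le D$. For a graded order this comes down to the identity $\mathcal I_d = \sum_i k[x_1,\dots,x_n]_{d-\deg f_i}\,f_i$ used above, which says $\mathcal I_d$ is already spanned by the shifts of the \emph{original} generators alone; the subtle point --- and Lazard's contribution --- is precisely that no interreduction, and no reinjection of lower-degree output into higher degrees, is needed.
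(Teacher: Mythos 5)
Your proof is correct and follows the same route as the paper: degree-by-degree row reduction of the Macaulay matrix, the row-count bound $R_d\le m\binom{n+d-1}{d}$, an $O(R_dC_d^{\omega-1})$ echelon-form cost, and summation over $d\le D$. The only cosmetic difference is the reference used for the fast echelon-form step (the paper invokes Storjohann's $O(R_dC_dr^{\omega-2})$ bound and then relaxes it), which does not change the argument.
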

The terminology and notations
relative to Gr\"obner bases are recalled in Section~\ref{sec1}, and we
generally follow~\cite{CoLiOS97}. The simple proof of this proposition
is given in Section~\ref{sec1}. For the notation~$\omega$ and related notions, 
we refer to~\cite{GathenGerhard2003}.

Getting a ``small'' bound on the highest degree of the elements of the
Gr\"obner basis then leads to good complexity estimates. Such a bound
is available for regular systems in the graded-reverse-lexicographical
order (grevlex). In this situation, \cite{Lazard83} has shown that
after a generic linear change of coordinates, a bound is given by the
index of regularity of the ideal, which is itself bounded by
\begin{equation}\label{mac_bound}
\text{Macaulay's\ bound:}\qquad {i_{\text{reg}}} \le\sum_{i=1}^m (d_i-1) + 1,
\end{equation}
where~$d_i=\deg(f_i)$. This bound is named after \cite{Macaulay02},
who obtained it as an upper bound on the degree of intermediate
polynomials used in the computation of a resultant of generic
multivariate polynomials.

Taking~$m=n-\ell$ (with $\ell\ge0$) and injecting Macaulay's bound~\eqref{mac_bound} into the
upper bound of Proposition~\ref{propupper} leads to a general
asymptotic bound for the number of operations:
\begin{equation}\label{intro-upper-bound}
  \left(\frac{\delta^\delta}{(\delta-1)^{\delta-1}}\right)^{\!\!\omega (n-\ell)}n^{2-\omega/2}\left((\delta-1)\left(\frac{\delta}{2\pi(\delta-1)^3}\right)^{\!\!{\omega}/{2}}\!\!+O(1/n)\right),\quad n\rightarrow\infty,
\end{equation}
where $\delta$, assumed to be larger than~1, is the arithmetic mean of the
$d_i$'s. (When $\delta=1$, the system is linear.)

Thus in this case, we have a complexity which is \emph{simply}
exponential in the number of variables. Since in this case, if the
field is algebraically closed, by B\'ezout's bound, the degree of the
variety is also exponential, the result can be interpreted as a
polynomial complexity in some size of the result.  No change of
variable is necessary when the dimension is~0. Otherwise, without a
generic linear change of coordinates, the bound does not hold in
general, as observed by~\cite{MoMo84}.

These results can be made effective by a careful study of the required
genericity condition. Indeed, \cite{LJ84} shows that a sufficient
condition for the bound to hold is that the variables be in
\emph{simultaneous Noether position} with respect to the polynomial
system. (The definition is recalled in Section~\ref{sec1}). If the
system is regular but the variables are not in simultaneous Noether
position, and the field is sufficiently large, then a linear change of
variables can be exhibited that puts the variables in this position.
The complexity of actually finding such a linear change of variables
in the worst case has been studied by~Giusti~(\cite{Giusti88}, \S5.6) and later by~\cite{GiHe93}. It is used as
an ingredient to compute the dimension in small
complexity~(\cite{GiHaLeMaSa00}). The name ``simultaneous Noether
position'' for this situation has been used at least since the work
of~\cite{KrPa96}.

This simply exponential behaviour being established, we are interested
in sharpening the complexity estimates. This is important in order to
compare various algorithms precisely, including approaches to
polynomial system solving that do not use Gr\"obner bases, such as
developed by~\cite{GiLeSa01}.  We concentrate on systems with
variables in simultaneous Noether position. This forms the basis for
many other applications, either by changes of coordinates as we have
just indicated, or by changes of order following~\cite{FaGiLaMo93}, or
by other techniques as developed for instance by
\cite{LaLa91,Lakshman91,HaLa05}.

Most algorithmic variants of
Buchberger's~algorithm (\cite{Buchberger65})  spend
part of their time computing reductions to~0, which is why many
criteria and strategies have been developed over the years. An
assessment of the efficiency of these strategies is obtained for
instance by a comparison of their complexity for $m=n-\ell$ polynomials in $n$ variables with the
bound~\eqref{intro-upper-bound}, for an arbitrary fixed $\ell\ge0$.  We obtain such a complexity estimate
for a specific algorithm, namely Faugère's~$F_5$
algorithm~(\cite{Faugere02}). This algorithm has been the first one to introduce signatures in order to detect efficiently useless reductions to zero.
Since then, many researchers have worked on understanding the new criteria behind $F_5$, which has led to new variants of the signature-based approach.
\cite{preprint3} give a detailed introduction to this topic.
In Section~\ref{sec2}, we present and analyse the matrix-$F_5$ version of the algorithm. A consequence of our results
is the following estimate.
\begin{theorem}\label{thm:nF5}
  Let~$(f_1,\dots,f_m)$ be a system of homogeneous polynomials of
  identical degree~$\delta\ge2$ in $k[x_1,\dots,x_n]$ with $m=n-\ell$ and $\ell\ge 0$, with respect to
  which $(x_1,\dots,x_n)$ are in simultaneous Noether position. Then
  the number of arithmetic operations in~$k$ required \emph{by
    Algorithm matrix-$F_5$} to compute a Gr\"obner basis for the
  grevlex order is bounded by a function of $\delta,\ell,n$ that behaves asymptotically as
  \begin{equation}
    B(\delta)^{n}n\,(A(\delta,\ell)+O(1/n)),\quad
    n\rightarrow\infty,\label{eq:asymptNF5}
  \end{equation}
  when $\ell$ and $\delta$ are $O(1)$. There, the coefficients $B(\delta)$ and $A(\delta,\ell)$ are given by \[B(\delta)=\frac{\left(\frac{\lambda_0+1}{\lambda_0}\right)^{2\delta}-1}{\frac{1}{\lambda_0^2}-\frac{1}{(\lambda_0+1)^2}}\quad \text{and}\quad A(\delta,\ell) = \frac{1-\delta^{-1}}{2
    {\pi }}\cdot
  \frac{\left(1+\lambda_0^{-1}\right)^3-1}{(1+\lambda_0)^{1+\ell}},\]
$\lambda_0$ being the unique
  positive root between $\frac{\delta-1}2$ and $\delta-1$ of
\[\left(\frac{\lambda+1}{\lambda}\right)^{2\delta}=\frac{1}{1-\delta\frac{(\lambda+1)^2-\lambda^2}{(\lambda+1)^3-\lambda^3}}.\]
Moreover, the dominant term~$B(\delta)$ is bounded between $\delta^3$ and~$3\delta^3$.
\end{theorem}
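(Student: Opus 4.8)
The plan is to feed the structural description of Algorithm matrix-$F_5$ from Section~\ref{sec2} into a saddle-point/Laplace asymptotic estimate. First I would use the analysis of Section~\ref{sec2}: since $(x_1,\dots,x_n)$ are in simultaneous Noether position, the sequence $(f_1,\dots,f_m)$ is regular, matrix-$F_5$ performs no reduction to zero, and at each degree $d$ and incremental step $i$ the relevant sizes of $\Mac{i}{d}$ (the number of rows retained by the $F_5$ criterion and the number of columns actually touched by the reductions) are controlled by the Hilbert function of $k[x_1,\dots,x_n]/\langle f_1,\dots,f_i\rangle$, which is the coefficient sequence of $(1-t^\delta)^i/(1-t)^n$. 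This turns the operation count into an explicit upper bound of the shape $\sum_d\sum_i P_{i,d}$, with $P_{i,d}$ a product of binomial coefficients $\binom{n+d-c}{\,n-1}$ and of such Hilbert-function values, the outer sum stopping at the index of regularity, which by Macaulay's bound~\eqref{mac_bound} is at most $(n-\ell)(\delta-1)+1=\Theta(n)$.

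The core of the proof is then the asymptotics of this double sum as $n\to\infty$ with $\delta,\ell=O(1)$. Writing $d=\mu n$, Stirling's formula gives $\binom{n+d-1}{\,n-1}$ the exponential rate $\phi(\mu)=(1+\mu)\log(1+\mu)-\mu\log\mu$, and a saddle-point estimate gives the Hilbert-function coefficient $[t^{\mu n}]\bigl((1+t+\cdots+t^{\delta-1})^n\bigr)$ the rate $g(\mu)=\log\frac{1-\tau^\delta}{1-\tau}-\mu\log\tau$, with $\tau=\tau(\mu)$ the positive root of $\frac1{1-\tau}-\frac{\delta\tau^{\delta-1}}{1-\tau^\delta}=\frac\mu\tau$ (one has $\tau>1$ once $\mu>(\delta-1)/2$). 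The inner sum over $i$ is dominated by its terms $i\approx m=n-\ell$, so $\sum_iP_{i,d}$ has exponential rate $\psi(\mu)=2\phi(\mu)+g(\mu)$; the value of $\ell$ affects only the subexponential factor, through the length $n-\ell$ of the inner range (equivalently, through the extra factor $(1-t^\delta)^\ell$ relating the Hilbert series of $n-\ell$ and of $n$ generic forms), which is why $B$ is independent of $\ell$ and $A$ is not. Since $g$ peaks at $\mu=(\delta-1)/2$ and vanishes at $\mu=\delta-1$, the maximiser $\lambda_0$ of $\psi$ lies strictly in between; imposing $\psi'(\lambda_0)=0$ with $\phi'(\mu)=\log\frac{1+\mu}{\mu}$ and $g'(\mu)=-\log\tau(\mu)$ gives $\tau(\lambda_0)=\bigl(\tfrac{\lambda_0+1}{\lambda_0}\bigr)^2$, and eliminating $\tau$ between this and the defining equation of $\tau$ produces exactly the transcendental equation of the statement. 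Substituting back and simplifying---using $e^{2\phi(\lambda_0)}\,\tau(\lambda_0)^{-\lambda_0}=(1+\lambda_0)^2$ and $\tau(\lambda_0)-1=\tfrac{2\lambda_0+1}{\lambda_0^2}$---collapses $e^{\psi(\lambda_0)}$ to the announced $B(\delta)$, which (using the transcendental equation once more) may also be written $B(\delta)=\tfrac{\delta\,\lambda_0^2(\lambda_0+1)^2}{3\lambda_0^2+3\lambda_0+1-\delta(2\lambda_0+1)}$.

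It then remains to pin down the polynomial factor and the constant: a Laplace (Gaussian) approximation of the $d$-sum around $\lambda_0 n$---legitimate because the summand is unimodal with $\psi''(\lambda_0)<0$---combined with the exact subexponential prefactors of the binomials and of the saddle-point estimate for the Hilbert function, and with the near-boundary analysis of the $i$-sum, yields $B(\delta)^n\,n\,(A(\delta,\ell)+O(1/n))$, the $\tfrac1{2\pi}$ coming from the Gaussian normalisations, $1-\delta^{-1}$ from the second-order and saddle data, and $(1+\lambda_0)^{-(1+\ell)}$ from the precise prefactors (the $\ell$-dependence entering via the range $1\le i\le n-\ell$). For the bounds, the rewriting above shows $B(\delta)=\delta^3$ when $\lambda_0=\delta-1$, and since $\lambda_0<\delta-1$ while $B$ is monotone in $\lambda_0$ on the relevant interval one gets $B(\delta)\ge\delta^3$; the upper bound $B(\delta)\le 3\delta^3$ follows from an elementary but careful localisation of $\lambda_0$ in $[\tfrac{\delta-1}{2},\delta-1]$ (by the sign of $\bigl(\tfrac{\lambda+1}{\lambda}\bigr)^{2\delta}\bigl(1-\delta\tfrac{2\lambda+1}{3\lambda^2+3\lambda+1}\bigr)-1$ at the endpoints and monotonicity), far enough from the zero of $3\lambda_0^2+3\lambda_0+1-\delta(2\lambda_0+1)$. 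The main obstacle is the asymptotic analysis itself: the binomial and saddle-point estimates must hold \emph{uniformly} in $d$ over the whole range $1\le d\le(n-\ell)(\delta-1)+1$, one must show that only a $\Theta(\sqrt n)$-window around $\lambda_0 n$ contributes, and the cases $\ell=0$ (Hilbert function finitely supported) and $\ell\ge1$ (the extra denominator $(1-t^\delta)^\ell$ perturbing only the prefactor) must be handled together while keeping $A(\delta,\ell)$ exact.
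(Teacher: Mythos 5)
Your high-level plan matches the paper's proof closely: both start from the operation count as a double sum over $i$ and $d$ of a product ``(number of new rows to reduce)$\times$(number of reducing rows)$\times$(row length)''; both observe that the $i$-sum is dominated by $i\approx m$; both locate the optimal degree by a saddle-point argument and deduce the transcendental equation for $\lambda_0$. Your reorganisation into a rate function $\psi(\mu)=2\phi(\mu)+g(\mu)$ with $\psi'(\lambda_0)=0\Rightarrow\tau(\lambda_0)=\bigl(\tfrac{\lambda_0+1}{\lambda_0}\bigr)^2$ is a tidy repackaging of what the paper does via equations~\eqref{rofi} and~\eqref{eq_delta} together with a bivariate asymptotic expansion, and your computation of $e^{\psi(\lambda_0)}$ does collapse to $B(\delta)$; the alternative form $B(\delta)=\tfrac{\delta\lambda_0^2(\lambda_0+1)^2}{3\lambda_0^2+3\lambda_0+1-\delta(2\lambda_0+1)}$ is exactly what the paper uses when treating the inequalities. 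One small imprecision to note: the row-count controlling the summand is not the Hilbert function of $k[x_1,\dots,x_n]/\langle f_1,\dots,f_i\rangle$ (series $(1-t^\delta)^i/(1-t)^n$), but the quantity $b_d^{(i)}$ of Theorem~\ref{fg_gb} (series $z^\delta\bigl((1-z^\delta)/(1-z)\bigr)^{i-1}$), combined with the $\binom{i+d-1}{d}$ reducing rows coming from Proposition~\ref{cor:structure}. For $i=m=n-\ell$ with $\ell=O(1)$ the two choices give the same exponential rate, so your final $\psi$ and $B(\delta)$ are unaffected, but the distinction does matter for getting the constant $A(\delta,\ell)$ exactly.

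Where your sketch genuinely diverges from the paper and would need more work is in the bounds on $B(\delta)$. For $B(\delta)\ge\delta^3$ you assert that the rational expression $R(\lambda)=\tfrac{\delta\lambda^2(\lambda+1)^2}{3\lambda^2+3\lambda+1-\delta(2\lambda+1)}$ equals $\delta^3$ at $\lambda=\delta-1$ and is ``monotone in $\lambda_0$ on the relevant interval.'' But the denominator $3\lambda^2+3\lambda+1-\delta(2\lambda+1)$ changes sign inside $[\tfrac{\delta-1}{2},\delta-1]$ (e.g.\ for $\delta=2$ its positive root is $\approx0.77$), so the monotonicity and even the sign of $R$ are not obvious without first showing, via the transcendental equation, that $\lambda_0$ falls in the sub-interval where the denominator is positive. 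The paper sidesteps this entirely by factoring $R(\lambda_0)-\delta^3$ and rewriting the denominator as $\bigl((1+\lambda_0)^3-\lambda_0^3\bigr)(1+\lambda_0^{-1})^{-2\delta}>0$ using equation~\eqref{eq_lambda_0}, which gives the inequality directly from $\lambda_0<\delta-1$ with no monotonicity claim. For the upper bound $B(\delta)\le3\delta^3$, your ``careful localisation of $\lambda_0$ far from the pole'' is only a plan; the paper instead proves $B(\delta)/\delta^3$ is increasing in $\delta$ (by differentiating through the implicit definition of $\lambda_0$) and evaluates its limit $\frac{-4}{W(-3e^{-3})(3+W(-3e^{-3}))^2}\approx2.814$ as $\delta\to\infty$. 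That route is quite different from yours and is the part of the argument you would most need to flesh out.
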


Explicit values of this bound~\eqref{eq:asymptNF5}, called the
$F_5$-bound, are given in
Table~\ref{table:complexite}.
\begin{table}[h]
\centerline{
   \begin{tabular}{|l|l|l|l|l|l|l|l|l|l|l|l|l|l|l|l|l|l|l|l|l|l|}
\hline
$\delta$ & 2& 3& 4& 5& 6& 7& 8\\
\hline
$B(\delta)^n$ & ${2}^{ 4.29\,n}$ & $2^{ 6.16\,n}$ & $2^{ 7.44\,n}$ & $2^{ 8.43\,n}$ & $2^{
  9.23\,n}$ & $2^{ 9.90\,n}$ & $2^{ 10.5\,n}$ \\
\hline
=& $\left( {2}^{n} \right) ^{ 4.3}$ & $\left( {3}^{n} \right) ^{ 3.9}$ & 
 $ \left( {4}^{n} \right) ^{ 3.7}$ & $\left( {5}^{n} \right) ^{ 3.6}$ & 
 $ \left( {6}^{n} \right) ^{ 3.6}$ & $\left( {7}^{n} \right) ^{ 3.5}$ & 
 $ \left( {8}^{n} \right) ^{ 3.5}$ \\
 \hline
=& $(2.5)^n2^{3n}$&$(2.7)^n3^{3n}$&$(2.7)^n4^{3n}$&$(2.8)^n5^{3n}$&$(2.8)^n6^{3n}$&$(2.8)^n7^{3n}$&$(2.8)^n8^{3n}$ \\
\hline
   \end{tabular}}
   \caption[Asymptotic Behaviour of the $F_5$-bound]{Asymptotic Behaviour of the $F_5$-bound (Equation~\eqref{eq:asymptNF5}), in terms of the B\'ezout bound $\delta^n$.\label{table:complexite}}
\end{table}

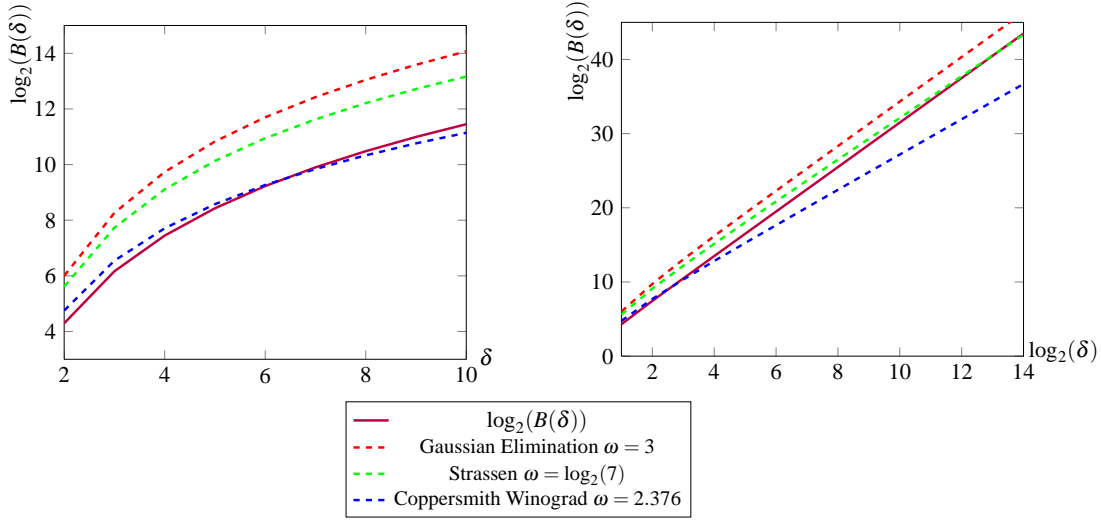
\begin{figure}[h]
\centering
\begin{minipage}[c]{0.47\linewidth}
  \begin{tikzpicture}[scale=0.78]
    \pgfplotsset{every axis legend/.append style={
        at={(1,0.2)},anchor= east}}
    \begin{axis}[ xmin=2, xmax=10, ymin=3, ymax=15, xlabel
      style={at={(1.05,0.05)},anchor=south}, ylabel
      style={at={(0.07,1.09)},anchor=east}, xlabel={\(\delta\)},
      ylabel={\(\log_2(B(\delta))\)}, cycle list={%
        {purple,very thick}, {red,dashed, very thick},
        {green,dashed,very thick}, {blue,dashed,very thick},
        {yellow,dashed,very thick}}, legend style={ at={(0.7,-0.3)},
        anchor=west }]
      \addplot coordinates { (2,4.294889968) (3,6.164453788)
        (4,7.446763612) (5,8.429308942) (6,9.227401400)
        (7,9.899960455) (8,10.48137341) (9,10.99352583)
        (10,11.45123225) };
      \addplot coordinates {(2,6.0000) (3,8.2646) (4,9.7353)
        (5,10.829) (6,11.700) (7,12.425) (8,13.046) (9,13.588)
        (10,14.070)
      };
      \addplot coordinates {(2,5.6145) (3,7.7338) (4,9.1100)
        (5,10.133) (6,10.949) (7,11.627) (8,12.208) (9,12.715)
        (10,13.166) };
      \addplot coordinates {(2,4.7519) (3,6.5456) (4,7.7103)
        (5,8.5765) (6,9.2667) (7,9.8406) (8,10.332) (9,10.762)
        (10,11.143) };
      \legend{$\log_2(B(\delta))$,\small{Gaussian Elimination
          $\omega=3$},\small{Strassen
          $\omega=\log_2(7)$},\small{Coppersmith Winograd $
          \omega=2.376$}
      }
    \end{axis}
  \end{tikzpicture}
\end{minipage}
\begin{minipage}[c]{0.47\linewidth}
  \begin{tikzpicture}[scale=0.78]
    \pgfplotsset{every axis legend/.append style={
        at={(1,0.2)},anchor= east}}
    \begin{axis}[ xmin=1, xmax=14, ymin=0, ymax=45, xlabel
      style={at={(1.1,0.05)},anchor=south}, ylabel
      style={at={(0.07,1.09)},anchor=east},
      xlabel={\(\log_2(\delta)\)}, ylabel={\(\log_2(B(\delta))\)},
      cycle list={%
        {purple,very thick}, {red,dashed, very thick},
        {green,dashed,very thick}, {blue,dashed,very thick},
        {yellow,dashed,very thick}}, legend style={ at={(0.7,-0.3)},
        anchor=west }]
      \addplot coordinates { (1,4.294889968) (2,7.446763612)
        (3,10.48137341) (4,13.48984364) (5,16.49195018)
        (6,19.49247614) (7,22.49260758) (8,25.49264044)
        (9,28.49264866) (10,31.49265071) (11,34.49265121)
        (12,37.49265135) (13,40.49265138) (14,43.49265140) };
      \addplot coordinates { (1,6.0000) (2,9.7353) (3,13.046)
        (4,16.190) (5,19.260) (6,22.294) (7,25.311) (8,28.320)
        (9,31.324) (10,34.326) (11,37.327) (12,40.327) (13,43.328)
        (14,46.328) };
      \addplot coordinates { (1,5.6145) (2,9.1100) (3,12.208)
        (4,15.150) (5,18.022) (6,20.863) (7,23.685) (8,26.501)
        (9,29.311) (10,32.121) (11,34.929) (12,37.738) (13,40.544)
        (14,43.351) };
      \addplot coordinates { (1,4.7519) (2,7.7103) (3,10.332)
        (4,12.822) (5,15.254) (6,17.657) (7,20.046) (8,22.429)
        (9,24.808) (10,27.186) (11,29.564) (12,31.940) (13,34.316)
        (14,36.692) };
    \end{axis}
  \end{tikzpicture}
\vspace*{1.3cm}
\end{minipage}
  \caption{Asymptotic of the $F_5$-bound vs linear algebra on the
    Macaulay Matrix}
  \label{fig:BvsGauss}
\end{figure}

\noindent We now draw a few consequences of this theorem.
\paragraph{Numerical estimates}
In view of~\eqref{intro-upper-bound}, $B(\delta)$ can be compared to
$\left({\delta^\delta}/{(\delta-1)^{\delta-1}}\right)^\omega$
where $\omega$ is the exponent of matrix multiplication over $k$, and
therefore our result can be interpreted as a first measure of the
extent to which the $F_5$ algorithm exploits the structure of the
Macaulay matrix for the computation.
In Figure~\ref{fig:BvsGauss}, we display the values of~$B(\delta)$ as
well as the arithmetic complexity of the linear algebra performed on
the Macaulay matrix for different values of the exponent $\omega$.
The first plot gives these values for~$\delta$ from~2 to~10, and the
second one gives the logarithm of these values in terms of
$\log(\delta)$, for $\delta$ from $2^1$ to
$2^{14}=16384$. 
For $2\le\delta<7$, the $F_5$-bound gives a better complexity than
the~\cite{CoWi90} bound~$\omega<2.376$ and the recently improved bounds
down to~$\omega<2.373$ by~\cite{Stothers2010}, \cite{Vassilevska-Williams2011}
and~\cite{Le-Gall2014}. It is better than Strassen's
bound~$\omega=\log_27$ for $2\le \delta<9911$. Thus in practice, for this whole range of degrees, the complexity estimate of $F_5$ behaves asymptotically (wrt to $n$) exponentially better than linear algebra with fast matrix multiplication over the Macaulay matrix.

\paragraph{Nonhomogeneous systems} In the affine case,
Theorem~\ref{thm:nF5} can often be applied with \(\ell=1\):
let $(f_1,\dots,f_n)$ be a system of affine polynomials of identical
degree~$\delta\ge2$ in $k[x_1,\dots,x_n]$; we consider
\((H_1,\ldots,H_n)\) the polynomials obtained by homogenizing the
\(f_{i}\) in \(k[x_1,\dots,x_{n},h]\). Provided $x_1,\dots,x_n,h$ are
in simultaneous Noether position with respect to the system
\((H_1,\ldots,H_n)\) (or equivalently, $x_1,\dots,x_n$ are in
simultaneous Noether position with respect to the system formed by the
homogeneous part of highest degree of the $f_i$), we can then apply the
theorem to \((H_1,\ldots,H_n)\) and derive a bound on the number of operations.
\paragraph{Other term orders} Under the same hypotheses as in Theorem~\ref{thm:nF5}, the computation of a Gröbner basis for the lexicographical order can be achieved by first
computing a Gröbner basis for the grevlex order using Algorithm
matrix-$F_5$ in $O(n B^{n})$ operations and then converting into a
basis for the lexicographical order using the FGLM algorithm
of~\cite{FaGiLaMo93} in~$O(n\delta^{3n})$
operations. Since~$B\ge\delta^3$, the overall complexity is still
bounded by~$O(nB^n)$ arithmetic operations over~$k$.
\paragraph{System solving}
If the field $k$ is infinite and the system is regular, a generic
linear change of variables puts the variables in simultaneous Noether
position. The construction is given for instance
by~Giusti (\cite{Giusti88}, \S5.6), see
also~\cite{GiHe93}. Thus in practice, for zero-dimensional polynomial
system solving, the simultaneous Noether position hypothesis 
can be replaced by the regularity of the system.

\medskip
This article is structured as follows. In Section~\ref{sec1}, we
recall the basic definitions and properties of regular sequences and
Gr\"obner bases, the relation between Gr\"obner bases and linear
algebra and the definition of simultaneous Noether position. In
Section~\ref{sec2}, we give a simple version of the $F_5$
algorithm and we give a structure theorem for Gr\"obner bases computed by this signature-based algorithm. In Section~\ref{sec2bis} we describe more precisely its
behaviour for systems with variables in simultaneous Noether position for the grevlex
ordering and deduce an upper bound for the complexity of $F_5$ in this
case.
Finally, in Section~\ref{sec3} we discuss the practical accuracy of
the bounds we provide in this paper.  In view of our numerical
experiments, the practical behaviour of the algorithm $F_5$ seems to
be asymptotically exponentially better than our bound. A
characterisation of the exact exponent in the complexity is still
open.

Preliminary versions of this work have appeared
 in~Bardet's PhD thesis~(\cite{Bardet04}).

\section{Gr\"obner Bases and Regularity}\label{sec1}
This section gathers classical definitions and properties, so that this article is self-contained.
We generally follow the terminology and notations
of~\cite{CoLiOS97}.
\subsection{Basic Notation and Definitions}
The polynomial systems we consider are always denoted
$(f_1,\dots,f_m)\in k[x_1,\dots,x_n]$ where $k$ is a field. We denote
by $d_i$ the degree of~$f_i$. 
Throughout this article, the polynomials are homogenous.
The set of homogeneous polynomials of degree~$d$ is denoted~$k[x_1,\dots,x_n]_d$. We use~$\mathcal{T}^i$ to denote the set of nonzero \emph{monomials} in~$x_1,\dots,x_i$ (i.e., products $x_1^{\alpha_1}\dotsm x_i^{\alpha_i}$ with nonnegative integer exponents $\alpha_j$), and $\mathcal{T}^i_d$ the subset of monomials of degree~$d$. When~$i=n$, we use simply~$\mathcal{T}=\mathcal{T}^n$ and $\mathcal{T}_d=\mathcal{T}^n_d.$ The ideal generated by the polynomial system is denoted $\mathcal{I}=\langle f_1,\dots,f_m\rangle$ and the vector space of  homogeneous polynomials of degree~$d$ in~$\mathcal{I}$ is denoted~$\mathcal{I}_d$.

A \emph{monomial ordering} is a total order on monomials that is compatible with the product and such that every nonempty set has a smallest element for the order. Such an ordering is \emph{graded} if monomials of different degrees are ordered according to their degree. The \emph{leading term} $\LT(f)$ of a polynomial $f$ is the \emph{term} (i.e., monomial multiplied by a nonzero constant in $k$) corresponding to its largest monomial for the given monomial ordering.

The \emph{grevlex} ordering is a graded ordering. The order between two monomials of the same degree~$x_\alpha=x_1^{\alpha_1}\dotsm x_n^{\alpha_n}$ and~$x_\beta=x_1^{\beta_1}\dotsm x_n^{\beta_n}$ is given by~$x_\alpha\succ x_\beta$ when the last nonzero element of $(\alpha_1-\beta_1,\dots,\alpha_n-\beta_n)$ is negative. Thus, among the monomials of degree~$d$, the order is
\[x_1^d\succ x_1^{d-1}x_2\succ x_1^{d-2}x_{2}^2\succ\dots\succ x_2^d\succ x_1^{d-1}x_3\succ x_1^{d-2}x_2x_3\succ x_1^{d-2}x_3^2\succ\dots\succ x_n^d.\]

A \emph{Gr\"obner basis} of an ideal~$\mathcal{I}$ for a given monomial ordering is a set~$G$ of generators of~$\mathcal{I}$ such that the leading terms of $G$ generate the monomial ideal $\langle\LT(\mathcal{I})\rangle$, which is the ideal generated by the monomials $\LT(f), f \in \mathcal{I}$. A polynomial is \emph{reduced} with respect to the Gr\"obner basis~$G$ when its leading term is not a multiple of those of~$G$. The basis is \emph{reduced} if each element~$g\in G$ is reduced with respect to~$G\setminus\{g\}$.

\subsection{Macaulay's Matrix}
We recall briefly the construction of this matrix and the explicit
relation with Gr\"obner bases, which can be found in the works
by~\cite{Lazard83} and~\cite{Giusti84,Giusti85}. There are several
advantages to this point of view: one is that, for a graded order, it
gives an easy access to the Hilbert function of the ideal, another one
is that upper bounds on the complexity are easily recovered from
classical linear algebra.

For a given degree~$d$ and polynomial system~$(f_1,\dots,f_m)$, Macaulay's matrix~$\Mac{d}{m}$ has its columns indexed by the monomials of~$\mathcal{T}_d$. For each polynomial~$f_i$ of the system and each monomial~$t\in\mathcal{T}_{d-d_i}$, it contains one row whose entry in the column indexed by a monomial $t'$ is the coefficient of~$t'$ in $tf_i$. Thus, the rows of this matrix generate the vector space~$\mathcal{I}_d$.
The \emph{Hilbert function} is defined by 
\[\HF_\mathcal{I}(d)=\dim k[x_1,\dots,x_n]_d/\mathcal{I}_d,\]
it is therefore equal to the dimension $\binom{n+d-1}{d}$ of $k[x_1,\dots,x_n]_d$ minus the rank of $\Mac{d}{m}$. For~$d$ large enough, this function is a polynomial (the \emph{Hilbert polynomial}~$\operatorname{HP}_\mathcal{I}$). 
The generating function~$\operatorname{H}_\mathcal{I}=\sum_{d\ge0}{\HF_\mathcal{I}(d)z^d}$ is called the \emph{Hilbert series} of the ideal.

\subsection{Gr\"obner Bases}

Performing a Gaussian elimination on the matrix~$\Mac{d}{m}$ computes a basis of~$\mathcal{I}_d$. If moreover, the columns are ordered by decreasing order with respect to the chosen graded monomial ordering, and column pivoting is not allowed, then the leading terms of this basis give~$\LT(\mathcal{I}_d)$. From these bases for $\min(d_1,\dots,d_m)\le d\le D$, where $D$ is the maximal degree of the elements in the reduced Gr\"obner basis of~$\mathcal{I}$, this Gr\"obner basis can be reconstructed. This way, the computation is reduced to linear algebra operations. 

From there we  now prove the general upper bound on the complexity that has been given in Proposition~\ref{propupper}.

Macaulay's matrix $\Mac{d}{m}$ has $C_d=|\mathcal{T}_d|=\binom{n+d-1}{d}$ columns and $R_d=|\mathcal{T}_{d-d_1}|+\dots+|\mathcal{T}_{d-d_m}|$ rows. A basis of its rows is obtained by the computation of a reduced row echelon form. \cite{Storjohann00} has shown that fast matrix multiplication can be used to compute this form with a complexity of~$O(R_dC_dr^{\omega-2})$, where $r$ is the rank of the matrix. Thus, this is bounded by~$O(R_dC_d^{\omega-1})$.

The computation is performed for~$d=\min(d_1,\dots,d_m),\dots,D$. The number of rows is bounded by $mC_d$ and $C_d$ is an increasing sequence, so that the conclusion of Proposition~\ref{propupper} follows.

The remaining problems are to bound~$D$ and to perform this Gaussian elimination efficiently by taking the structure of the matrix into account. An important class where this can be done is the class of regular systems.
\subsection{Regular Systems}
\begin{definition}\label{defregsys}
 $(f_1,\ldots,f_m)$ is regular if for all $i=1,\ldots,m$, $f_i$ is
 not a zero-divisor in the quotient ring
 $k[x_1,\ldots,x_n]/\langle f_1,\ldots,f_{i-1}\rangle$. In other words if there
 exists $g$ such that
$
g f_i \in \langle f_1,\ldots,f_{i-1}\rangle
$
then $g$ belongs to $\langle f_1,\ldots,f_{i-1}\rangle$.
\end{definition}
The following Lemma gives a characterisation of zero-divisors for
homogeneous polynomials:
\begin{lemma}\label{lemma:zero-div}
  Let $\mathcal I\subset k[x_1,\dots,x_n]$ be an homogeneous ideal,
  $\mathcal I \neq \langle 1 \rangle$, and $f \in k[x_1,\dots,x_n]$
  homogeneous of degree $\delta\ge 1$. Then $f$ is not a zero-divisor
  in $k[x_1,\dots,x_n]/\mathcal I$ if and only if $
  \operatorname{H}_{\mathcal I+ \langle f \rangle}(z) =
  (1-z^\delta)\operatorname{H}_{\mathcal{I}}(z).$
\end{lemma}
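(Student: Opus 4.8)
The plan is to interpret the claimed identity of Hilbert series as the vanishing of the Hilbert series of the annihilator of $f$ in the quotient ring, and then to use that this last series has nonnegative coefficients. Write $S=k[x_1,\dots,x_n]$ and $A=S/\mathcal I$, graded by $A=\bigoplus_{d\ge0}A_d$ with $A_d=S_d/\mathcal I_d$, so that $\operatorname{H}_{\mathcal I}(z)=\sum_{d\ge0}(\dim_k A_d)z^d$. Since $f$ is homogeneous of degree $\delta$, multiplication by $f$ is a graded $k$-linear map $A_{d-\delta}\to A_d$ for every $d$; its kernel in degree $d$ is the degree-$(d-\delta)$ component of the homogeneous ideal $N:=(\mathcal I:f)/\mathcal I\subseteq A$, where $(\mathcal I:f)=\{g\in S: gf\in\mathcal I\}$. (Here I would recall the standard fact that, because $f$ is homogeneous, $f$ is a non-zero-divisor in $A$ if and only if it kills no nonzero homogeneous element, i.e. if and only if $N=0$.) Noting that $A/fA$ is canonically isomorphic as a graded ring to $S/(\mathcal I+\langle f\rangle)$, so that $\operatorname{H}_{A/fA}=\operatorname{H}_{\mathcal I+\langle f\rangle}$, the basic object is the exact sequence of graded $S$-modules
\[0\longrightarrow N(-\delta)\longrightarrow A(-\delta)\xrightarrow{\ \cdot f\ }A\longrightarrow S/(\mathcal I+\langle f\rangle)\longrightarrow 0,\]
where $M(-\delta)$ denotes $M$ with grading shifted by $M(-\delta)_d=M_{d-\delta}$. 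Exactness is immediate: the image of $\cdot f$ is $fA$, which is exactly the kernel $(\mathcal I+\langle f\rangle)/\mathcal I$ of the last surjection, and the kernel of $\cdot f$ is $N$, shifted as indicated.

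Next I would pass to Hilbert series. Restricting the sequence above to degree $d$ yields a four-term exact sequence of finite-dimensional $k$-vector spaces, so the alternating sum of dimensions vanishes:
\[\dim N_{d-\delta}-\dim A_{d-\delta}+\dim A_d-\dim\bigl(S/(\mathcal I+\langle f\rangle)\bigr)_d=0.\]
Multiplying by $z^d$ and summing over $d\ge0$ (using $\operatorname{H}_{M(-\delta)}(z)=z^\delta\operatorname{H}_M(z)$) gives the key relation
\[\operatorname{H}_{\mathcal I+\langle f\rangle}(z)=(1-z^\delta)\,\operatorname{H}_{\mathcal I}(z)+z^\delta\operatorname{H}_N(z).\]

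Finally I would conclude both implications. If $f$ is a non-zero-divisor in $A$, then $N=0$, so $\operatorname{H}_N=0$ and the relation above reduces to $\operatorname{H}_{\mathcal I+\langle f\rangle}(z)=(1-z^\delta)\operatorname{H}_{\mathcal I}(z)$. Conversely, if this equality holds, then $z^\delta\operatorname{H}_N(z)=0$ as a formal power series; since every coefficient of $\operatorname{H}_N$ is the dimension $\dim_k N_j\ge0$ of a homogeneous component of $N$, they must all vanish, whence $N=0$ and $f$ is a non-zero-divisor. The hypothesis $\mathcal I\neq\langle1\rangle$ only rules out the trivial case $A=0$, for which the statement is vacuous.

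There is no deep obstacle here; the points demanding care are purely bookkeeping: correctly tracking the degree shift by $\delta$ in the exact sequence, and — for the converse direction — invoking the nonnegativity of the coefficients of $\operatorname{H}_N$ so that vanishing of the series genuinely forces $N=0$ rather than merely a cancellation of some alternating combination. I would also, for self-containedness, spell out the identification of $\ker(\cdot f)$ with the graded ideal $N=(\mathcal I:f)/\mathcal I$ and of $A/fA$ with $S/(\mathcal I+\langle f\rangle)$.
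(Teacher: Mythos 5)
Your proof is correct and follows essentially the same route as the paper's: both arguments apply rank--nullity (packaged by you as a four-term exact sequence) to the multiplication-by-$f$ map $A_{d-\delta}\to A_d$, obtain the degree-by-degree identity $\HF_{\mathcal I+\langle f\rangle}(d)=\HF_{\mathcal I}(d)-\HF_{\mathcal I}(d-\delta)+\dim K_d$, sum to get the Hilbert-series relation, and conclude by nonnegativity of the kernel dimensions. The identification of the kernel with $(\mathcal I:f)/\mathcal I$ is a harmless extra precision not spelled out in the paper.
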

\begin{proof}
The proof boils down to using the relation between the dimensions of the kernel $K_d$ and image of the application of multiplication by~$f$ from $k[x_1,\dots,x_n]_{d-\delta}/\mathcal{I}_{d-\delta}$ to $k[x_1,\ldots,x_n]_d/\mathcal{I}_d$. 
This gives
\begin{equation}\label{HF-identity}
\HF_{\mathcal I + \langle f\rangle}(d)=
\HF_{\mathcal I}(d)
-\HF_{\mathcal I}(d-\delta) + \dim(K_d),\qquad d\in \mathbb{N}.
\end{equation}
(with the convention $\HF(-d)=0$ for $d\ge 1$). 
Multiplying~\eqref{HF-identity} by~$z^d$ and summing over~$d$ leads to 
\[H_{\mathcal I + \langle f\rangle}(z)=(1-z^{\delta})H_{\mathcal I}(z)
+ \sum_{d\ge 0} \dim(K_d)z^d.\]
The equivalence results from the
definition: $f$ is not a zero-divisor in
$k[x_1,\dots,x_n]/\mathcal{I}$ if and only if $\dim(K_d)=0$ for all
$d\ge 0$.
\end{proof}
This leads to a classical property of regular systems, essentially due
to~Macaulay (\cite[\S58]{Macaulay16}):
\begin{proposition}\label{prop:hilbertseries} The system of
  homogeneous polynomials $(f_1,\ldots,f_m)\subset k[x_1,\dots,x_n]$
  is regular if and only if its Hilbert series is
\begin{equation}\label{serie}
\operatorname H_{\mathcal{I}}(z) = \frac{\Pi_{j=1}^m(1-z^{d_j})}{(1-z)^n}.
\end{equation}
If $m=n$, then the sequence $(f_1,\dots, f_n)$ is regular if and only
if its Hilbert series is a polynomial.
\end{proposition}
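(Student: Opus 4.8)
The plan is to prove both equivalences by induction on~$m$, with Lemma~\ref{lemma:zero-div} and the dimension count~\eqref{HF-identity} as the engine; write $\mathcal{I}_k=\langle f_1,\dots,f_k\rangle$ and $Q_k(z)=\Pi_{j=1}^k(1-z^{d_j})/(1-z)^n$, so that $Q_m=(1-z^{d_m})Q_{m-1}$. The base case $m=0$ is the identity $\operatorname{H}_{\langle 0\rangle}(z)=\sum_{d\ge 0}\binom{n+d-1}{d}z^d=(1-z)^{-n}$. For the \emph{only if} direction I would argue: if $(f_1,\dots,f_m)$ is regular, then by Definition~\ref{defregsys} so is the initial subsequence $(f_1,\dots,f_{m-1})$, and $f_m$ is not a zero-divisor modulo~$\mathcal{I}_{m-1}$; since the $f_i$ are homogeneous of positive degree, $\mathcal{I}_{m-1}\subseteq\langle x_1,\dots,x_n\rangle\neq\langle 1\rangle$, so Lemma~\ref{lemma:zero-div} gives $\operatorname{H}_{\mathcal{I}_m}(z)=(1-z^{d_m})\operatorname{H}_{\mathcal{I}_{m-1}}(z)$, and the induction hypothesis yields~\eqref{serie}.

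For the \emph{if} direction, assume $\operatorname{H}_{\mathcal{I}_m}(z)=Q_m(z)$. Applying~\eqref{HF-identity} to $\mathcal{I}_{m-1}$ and $f=f_m$ and summing against $z^d$ gives $\operatorname{H}_{\mathcal{I}_m}(z)=(1-z^{d_m})\operatorname{H}_{\mathcal{I}_{m-1}}(z)+K(z)$ with $K(z)=\sum_{d}\dim(K_d)z^d$ having nonnegative coefficients. Dividing by $1-z^{d_m}$ and using $Q_m=(1-z^{d_m})Q_{m-1}$,
\[\operatorname{H}_{\mathcal{I}_{m-1}}(z)=Q_{m-1}(z)-K(z)\,(1+z^{d_m}+z^{2d_m}+\cdots),\]
so $\operatorname{H}_{\mathcal{I}_{m-1}}$ is dominated coefficientwise by $Q_{m-1}$. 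I would then invoke the converse domination, namely the classical fact that a regular sequence minimizes the Hilbert function among all systems of $k\le n$ forms of prescribed degrees: $\operatorname{H}_{\langle g_1,\dots,g_k\rangle}(z)\succeq Q_k(z)$ coefficientwise for all homogeneous $g_1,\dots,g_k$ of degrees $d_1,\dots,d_k$ with $k\le n$. Granting it for $k=m-1$ forces $\operatorname{H}_{\mathcal{I}_{m-1}}=Q_{m-1}$; the induction hypothesis makes $(f_1,\dots,f_{m-1})$ regular, hence $K\equiv 0$, hence $\dim K_d=0$ for all $d$, i.e.\ $f_m$ is not a zero-divisor modulo $\mathcal{I}_{m-1}$, so $(f_1,\dots,f_m)$ is regular. (For $m>n$ both sides fail, since $Q_m$ then has a negative coefficient and cannot be a Hilbert series.)

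The converse domination is the step I expect to require the most care, and I would prove it by semicontinuity: for fixed $d$, $\dim\langle g_1,\dots,g_k\rangle_d$ is the rank of the matrix whose rows are the $t\,g_i$ with $t\in\mathcal{T}_{d-d_i}$, a matrix depending linearly on the coefficients of the $g_i$ whose rank cannot increase under specialization; hence $\HF_{\langle g_1,\dots,g_k\rangle}(d)$ is at least its generic value. Passing to the algebraic closure changes neither rank nor Hilbert function, and for $k\le n$ a generic such system is a regular sequence — the non-regular locus is proper and closed because the codimension of $V(g_1,\dots,g_k)$ is at most $k$ with equality exactly for regular sequences (Cohen--Macaulayness of $k[x_1,\dots,x_n]$) and is upper-semicontinuous, while the explicit monomial system $(x_1^{d_1},\dots,x_k^{d_k})$ is regular. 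For that regular system the \emph{only if} direction gives $\HF(d)=[z^d]Q_k$, which is therefore the generic value, and the bound follows.

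Finally, the case $m=n$: if $(f_1,\dots,f_n)$ is regular then by the main part $\operatorname{H}_{\mathcal{I}}(z)=\Pi_{j=1}^n(1+z+\dots+z^{d_j-1})$, a polynomial. Conversely, if $\operatorname{H}_{\mathcal{I}}$ is a polynomial then $k[x_1,\dots,x_n]/\mathcal{I}$ is finite-dimensional, so $\sqrt{\mathcal{I}}=\langle x_1,\dots,x_n\rangle$ and $(f_1,\dots,f_n)$ is a homogeneous system of parameters of the Cohen--Macaulay ring $k[x_1,\dots,x_n]$, hence a regular sequence; alternatively one combines the converse domination for $k=n$ with the Bézout bound $\operatorname{H}_{\mathcal{I}}(1)=\dim_k k[x_1,\dots,x_n]/\mathcal{I}\le\Pi_{j=1}^n d_j$ to obtain $\operatorname{H}_{\mathcal{I}}=Q_n$ and invokes the main part.
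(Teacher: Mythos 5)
Your proof is correct and follows the same overall strategy as the paper (induction on the length, Lemma~\ref{lemma:zero-div} as the engine, and the Hilbert-series recursion coming from~\eqref{HF-identity}), but it is considerably more complete at the one place where the paper is terse. The paper's proof of the first equivalence says only that it ``follows from the previous lemma''; that disposes of the \emph{only if} direction, but the \emph{if} direction genuinely needs the coefficientwise bound $\operatorname H_{\mathcal I}(z)\succeq Q_m(z)$, which the paper simply asserts (for $m=n$) as a consequence of~\eqref{HF-identity} and then exploits via B\'ezout at $z=1$. In fact~\eqref{HF-identity} alone does \emph{not} yield that inequality — iterating it produces terms of both signs — so the paper is leaning on a standard but unproved fact. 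You identify this as the crux and supply a legitimate proof: the rank of the Macaulay matrix is lower-semicontinuous in the coefficients of the $g_i$, the regular locus is nonempty (witnessed by the monomial complete intersection $(x_1^{d_1},\dots,x_k^{d_k})$) and open by upper semicontinuity of fibre dimension over $\bar k$, so the generic Hilbert function is $Q_k$ and every specialization dominates it. Your treatment of $m=n$ offers both the Cohen--Macaulay system-of-parameters argument and the paper's B\'ezout-at-$z=1$ argument, and your parenthetical dispatches $m>n$ correctly. One small wording point: it is cleaner to say that ``$\operatorname{codim}\ge k$'' is the open condition (dimension of fibres is upper semicontinuous), rather than calling codimension itself upper-semicontinuous. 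Net effect: same route, but you prove the domination lemma the paper takes for granted, which makes the \emph{if} direction actually rigorous.
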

\begin{proof}
  The first part follows from the previous lemma
  and~$H_{\langle\rangle}(z)=(1-z)^{-n}$ that counts the number of
  homogeneous monomials. 

  For $m=n$, if $\operatorname H_{\mathcal I}(z)$ is a polynomial,
  then Bézout's bound~\cite[ch.~3\S5]{CoxLittleOShea2005} states that $\operatorname H_{\mathcal
    I}(1)$, which is the number of solutions of $\mathcal I$ in the
  algebraic closure of $k$, is bounded by $\Pi_{j=1}^n d_j$.  But
  Equation~\eqref{HF-identity} leads to $\operatorname
  H_{\mathcal{I}}(z) \ge \prod_{j=1}^n{(1-z^{d_j})}/{(1-z)^n}$ with
  inequality coefficient by coefficient. By taking the value at $z=1$,
  we get $\prod_{j=1}^n d_j \ge \operatorname H_{\mathcal I}(1) \ge
  \prod_{j=1}^n d_j$, which proves the equality. As each coefficient of
  the Hilbert series is nonnegative, we deduce that $\operatorname
  H_{\mathcal{I}}(z) = \prod_{j=1}^n{(1-z^{d_j})}/{(1-z)^n}$ and the
  sequence is regular by the first part of the lemma.
\end{proof}
\begin{corollary}\label{cor:m=n}Let $(f_1,\dots,f_n)$ be a regular
  system of homogeneous polynomials in $k[x_1,\dots,x_n]$, then the highest degree in the
  elements of a Gr\"obner basis for a graded ordering is bounded by
  Macaulay's bound~\eqref{mac_bound}.
\end{corollary}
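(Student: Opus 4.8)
The plan is to combine Proposition~\ref{prop:hilbertseries}, which pins down the Hilbert series of a regular system, with the classical dictionary between the Hilbert function and the Macaulay matrix recalled above. The key observation is that Macaulay's bound~\eqref{mac_bound} is, for a regular system with $m=n$, exactly the first degree past which the Hilbert function vanishes, i.e.\ the index of regularity, so the geometric content is that in degrees $d \ge i_{\text{reg}}$ the Macaulay matrix already has full rank $\binom{n+d-1}{d}$.

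First I would compute the index of regularity. By Proposition~\ref{prop:hilbertseries}, $\operatorname H_{\mathcal I}(z) = \prod_{j=1}^n (1-z^{d_j})/(1-z)^n = \prod_{j=1}^n (1+z+\dots+z^{d_j-1})$, which is a polynomial of degree $\sum_{j=1}^n (d_j-1)$ with nonnegative coefficients and a nonzero leading coefficient. Hence $\HF_{\mathcal I}(d) > 0$ for $d = \sum (d_j-1)$ and $\HF_{\mathcal I}(d) = 0$ for all $d \ge \sum (d_j - 1) + 1 = i_{\text{reg}}$; and by the product formula with positive factors the support is in fact an interval, so $\HF_{\mathcal I}$ is eventually zero precisely from $i_{\text{reg}}$ on. This is the bound~\eqref{mac_bound} with equality (since all $d_i$ contribute).

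Next I would translate this into a statement about the Gr\"obner basis. As recalled in Section~\ref{sec1}, for a graded ordering the monomials of degree $d$ appearing as leading terms of $\mathcal I$ are obtained by row-reducing $\Mac{d}{m}$ with columns sorted by the order, and the reduced Gr\"obner basis is reconstructed from these data for $d$ up to the maximal degree $D$ occurring in it. So it suffices to show $\langle \LT(\mathcal I)\rangle$ is generated in degrees $\le i_{\text{reg}}$, equivalently that every monomial $t$ of degree $d \ge i_{\text{reg}}+1$ lies in $\langle \LT(\mathcal I)\rangle$ \emph{because} it is divisible by some leading monomial of degree $d-1$. Since $\HF_{\mathcal I}(d) = \HF_{\mathcal I}(d-1) = 0$ for $d \ge i_{\text{reg}}+1$, the monomial ideal $\langle\LT(\mathcal I)\rangle$ contains all monomials of degree $d-1$ and all monomials of degree $d$; in particular $t = x_k \cdot (t/x_k)$ for any variable $x_k$ dividing $t$, and $t/x_k$ has degree $d-1 \ge i_{\text{reg}} \ge \max_j(d_j-1)\ge 0$, hence $t/x_k \in \langle\LT(\mathcal I)\rangle$. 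Thus no new generator of $\langle\LT(\mathcal I)\rangle$ is needed beyond degree $i_{\text{reg}}$, so every element of the reduced Gr\"obner basis has degree $\le i_{\text{reg}} \le \sum_{i=1}^m(d_i - 1) + 1$, which is Macaulay's bound~\eqref{mac_bound}.

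The main obstacle — really the only delicate point — is making the last step airtight: one must be careful that "the reduced Gr\"obner basis is reconstructed from the row-echelon forms of $\Mac{d}{m}$ for $d \le D$" genuinely forces $D \le i_{\text{reg}}$, i.e.\ that a reduced Gr\"obner basis cannot have a redundant element of degree $> i_{\text{reg}}$. This follows because a reduced Gr\"obner basis is uniquely determined and its leading terms are exactly the minimal generators of the monomial ideal $\langle\LT(\mathcal I)\rangle$, and the divisibility argument above shows those minimal generators all sit in degree $\le i_{\text{reg}}$; one should perhaps note the edge case where $\langle\LT(\mathcal I)\rangle$ might already be the whole maximal ideal power in low degree, but the inequality $i_{\text{reg}} \ge \max_j(d_j-1) \ge 0$ covers it. I would also remark that this is exactly the $\ell = 0$ specialization of the finer analysis carried out later in the paper, so the corollary can alternatively be cited from there, but the self-contained Hilbert-series argument above is the cleanest.
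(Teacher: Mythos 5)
Your proof is correct and follows essentially the same route as the paper's: use Proposition~\ref{prop:hilbertseries} to identify the Hilbert series as a polynomial of degree $\sum_j(d_j-1)$, conclude that the Hilbert function vanishes from Macaulay's bound onward (so every monomial of that degree is a leading term), and deduce that the reduced Gr\"obner basis has no generators in higher degree. You merely make explicit the final divisibility step, which the paper's own proof compresses into the phrase ``whence the result.''
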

\begin{proof} When~$m=n$, the Hilbert series~\eqref{serie} is a polynomial, whose degree $D$ is one less than the bound~\eqref{mac_bound}. This implies that the Hilbert function is~0 for degree~$D$. In other words all the monomials of~$\mathcal{T}_D$ belong to~$\mathcal{I}_D$, whence the result.
\end{proof}
\begin{example}\label{ex:SNP}
The system 
  \begin{equation}
 \left\{ \begin{array}{rcl}
    f_1 &=& x^2 + y^2 -2xz -2yz + z^2 + h^2\\
    f_2 &=& x^2+x y+ y z - z^2 - 2h^2\\
    f_3 &=&  x^2 - y^2 + 2yz - 2 z^2 \\
  \end{array}
  \right.
\label{eq:circles}
  \end{equation} 
  in $k[x,y,z,h]$, represents, for $h=0$, the intersection of a
  projective circle and two hyperbolas over $k=\mathbb R$. The
  coefficient of $h^2$ is chosen so that the point $(1,1,1,1)$ is
  a solution of the system. The Hilbert series of the systems
  $(f_1,f_2)$ and $(f_1,f_2,f_3)$ (computed from a Gröbner Basis for a
  grevlex ordering) are respectively $H_{(f_1,f_2)}(t) =
  {(1+t)^2}/({1-t})^2=(1-t^2)^2/(1-t)^4$ and $H_{(f_1,f_2,f_3)}(t) =
  {(1+t)^3}/{(1-t)}=(1-t^2)^3/(1-t)^4$, showing that these systems are
  regular.
\end{example}

\subsection{Noether position}\label{noether}
\begin{definition} The variables $(x_1,\dots,x_m)$ are in \emph{Noether position} with respect to the system $(f_1,\dots,f_m)$ if their canonical images in $k[x_1,\dots,x_n]/\langle f_1,\dots,f_m\rangle$ are algebraic integers over $k[x_{m+1},\dots,x_n]$ and moreover $k[x_{m+1},\dots,x_n]\cap\langle f_1,\dots,f_m\rangle=\langle 0\rangle$.
\end{definition}
The variable $x_i\in k[x_1,\dots,x_n]/\langle f_1,\dots,f_m\rangle$ is
an algebraic integer over $k[x_{m+1},\dots,x_n]$ when there exists a
polynomial $g\in k[x_i,x_{m+1},\dots,x_n]\cap \langle
f_1,\dots,f_m\rangle$ that is monic with respect to~$x_i$. A Gröbner
basis of $\langle f_1,\dots,f_m\rangle$ for an elimination monomial
ordering such that $\{x_j, 1\le j\neq i\le m\}>\{x_i,
x_{m+1},\dots,x_n\}$ contains such a $g$ (up to a constant) if and
only if $x_i$ has the desired property.
\begin{example}
  The variables $(x,y)$ are in Noether position with respect to the
  system $(f_1,f_2)$ from Example~\ref{ex:SNP}.  Indeed, a Gröbner
  basis of the system for the lexicographical ordering $x>y>z>h$
  (resp. $y>x>z>h$) contains the polynomial
  $2\,{y}^{4}-6\,{y}^{3}z+12\,{y}^{2}{z}^{2}+7\,{y}^{2}{h}^{2}-8\,y{z}^{3}-20\,yz{h}^{2}+4\,{z}^{2}{h}^{2}+9\,{h}^{4}$
  (resp. $2\,{x}^{4}+2\,{x}^{3}z-2\,{x}^{2}{z}^{2}-3\,{x}^{2}{h}^{2}-2\,x{z}^{3}-2\,xz{h}^{2}+{z}^{2}{h}^{2}+4\,{h}^{4}
  $).

  On the other hand, the variables $(y,z)$ are not in Noether position
  with respect to the system $(f_1,f_2)$. Again, a Gröbner basis
  computation for the lexicographical ordering shows that $\mathcal I
  \cap k[y,x,h] = \langle
  2\,{y}^{3}x+4\,{y}^{2}{x}^{2}-{y}^{2}{h}^{2}+8\,y{x}^{3}-8\,yx{h}^{2}-4\,{x}^{2}{h}^{2}-{h}^{4}
  \rangle$.
\end{example}
Geometrically, the  Noether position implies that the algebraic set defined
by the system has dimension~$n-m$ and, in an algebraic closure of~$k$,
for any value of $(x_{m+1},\dots,x_n)$, the system has exactly the
same number of solutions (counting multiplicity). In a sufficiently large field, 
for regular systems, the variables can be put in Noether position by a generic linear change of variables, as explained by \cite{Giusti88}.
 
The following proposition characterises algebraically the Noether
position property for homogeneous ideals. It shows that in this case, this position is also
equivalent to the condition that the system
$(f_1,\dots,f_m,x_{m+1},\dots,x_n)$ has only the solution $\{0\}$.
\begin{proposition}[\cite{LJ84}]\label{prop6}
  Let $(f_1,\dots,f_m)$ be a system of homogeneous polynomials of
  $k[x_1,\dots,x_n]$, such that $ \langle f_1,\dots,f_m\rangle \neq
  \langle 1 \rangle$. If the variables $(x_1,\dots,x_m)$ are in
  {Noether position} with respect to the system~$(f_1,\dots,f_m)$, then
  the sequence $(f_1,\ldots,f_m,x_{m+1},\ldots,x_n)$ is
  regular.
\end{proposition}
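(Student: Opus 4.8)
The plan is to reduce the statement to a Hilbert-series computation via Proposition~\ref{prop:hilbertseries}. The sequence in question, $(f_1,\dots,f_m,x_{m+1},\dots,x_n)$, consists of $n$ homogeneous polynomials in $n$ variables, so by the last assertion of Proposition~\ref{prop:hilbertseries} it is regular if and only if its Hilbert series is a polynomial, i.e. if and only if the graded quotient ring
\[R := k[x_1,\dots,x_n]/\langle f_1,\dots,f_m,x_{m+1},\dots,x_n\rangle\]
is a finite-dimensional $k$-vector space. So the whole proof amounts to establishing this finiteness.

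Next I would exploit the Noether position hypothesis to control the ring $A := k[x_1,\dots,x_n]/\langle f_1,\dots,f_m\rangle$ as a module over $B := k[x_{m+1},\dots,x_n]$. Writing $\bar x_i$ for the image of $x_i$ in $A$, the ring $A$ is generated as an algebra over the image of $B$ by $\bar x_1,\dots,\bar x_m$, and each $\bar x_i$ with $1\le i\le m$ is, by assumption, integral over $B$: there is a $g\in k[x_i,x_{m+1},\dots,x_n]\cap\langle f_1,\dots,f_m\rangle$ monic in $x_i$, and replacing $g$ by the appropriate homogeneous component (the one whose degree equals $\deg_{x_i}g$, which is still monic in $x_i$) we may take the integral dependence relations homogeneous, so everything stays in the graded category. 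Hence $A$ is obtained from $B$ by finitely many integral ring extensions and is therefore a finitely generated $B$-module, by the standard ``tower of integral extensions'' argument, each step being module-finite because a single integral element generates a module-finite extension.

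Finally I would conclude by reducing modulo $x_{m+1},\dots,x_n$. Setting $\mathfrak m_B=\langle x_{m+1},\dots,x_n\rangle B$, one has $R\cong A/\mathfrak m_B A$, and since $A$ is a finitely generated $B$-module, $A/\mathfrak m_B A$ is a finitely generated module over $B/\mathfrak m_B\cong k$, i.e. a finite-dimensional $k$-vector space (it is nonzero by graded Nakayama over $B$, since $A\neq 0$ because $\langle f_1,\dots,f_m\rangle\neq\langle1\rangle$). Thus $R$ has polynomial Hilbert series, and Proposition~\ref{prop:hilbertseries} yields the regularity of $(f_1,\dots,f_m,x_{m+1},\dots,x_n)$. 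The only point requiring genuine care is the passage from ``each $\bar x_i$ integral over $B$'' to ``$A$ module-finite over $B$'' while respecting the grading; everything else is formal. An alternative route would be to note that $A$ is finite over the polynomial ring $B$ with $B\hookrightarrow A$ by the second Noether-position condition, so that $(f_1,\dots,f_m,x_{m+1},\dots,x_n)$ is a homogeneous system of parameters in the Cohen--Macaulay ring $k[x_1,\dots,x_n]$ and hence a regular sequence; but the Hilbert-series argument above is more in keeping with the tools already set up in this section.
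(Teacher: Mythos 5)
Your proof takes essentially the same route as the paper's: use the Noether-position hypothesis to show that the Hilbert series of $\langle f_1,\dots,f_m,x_{m+1},\dots,x_n\rangle$ is a polynomial, then invoke Proposition~\ref{prop:hilbertseries}. The only difference is in how the finiteness is established: you go through integral extensions, module-finiteness of $A$ over $B$, and graded Nakayama, whereas a more direct argument (which is presumably what the paper has in mind) observes that the homogeneous component of each monic $g_i$ of total degree $n_i=\deg_{x_i}g_i$ reduces to $x_i^{n_i}$ modulo $\langle x_{m+1},\dots,x_n\rangle$, so the quotient is spanned by the finitely many monomials $x_1^{a_1}\dotsm x_m^{a_m}$ with $a_i<n_i$. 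Your version is correct but uses more machinery than needed.
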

\begin{proof}
  The variables $(x_1,\dots,x_m)$ being in {Noether position} with
  respect to the system~$(f_1,\dots,f_m)$, for each $1\le
  i \le m$, there exists a polynomial $g\in
  k[x_i,x_{m+1},\dots,x_n]\cap \langle f_1,\dots,f_m\rangle$ of degree
  $n_i\ge 1$ in $x_i$ such that the coefficient of $x_i^{n_i}$ in $g$ is
  1. 
This implies
  that the Hilbert series
  $H_{(f_1,\dots,f_m,x_{m+1},\dots,x_n)}(z)$ is a polynomial.  Then by
  Proposition~\ref{prop:hilbertseries}, 
  $(f_1,\dots,f_m,x_{m+1},\dots,x_n)$ is a regular sequence.
\end{proof}

From the computational point of view, the following proposition gives
very precise information on the structure of a grevlex Gr\"obner
basis, it plays an important role in obtaining good complexity estimates
in the next section.
\begin{proposition}[\cite{LJ84}, Ch.~3, Prop.~3.4]\label{prop:LJ}
Let $(x_1,\dots,x_m)$ be in 
Noether position with respect to the homogeneous system~$(f_1,\dots,f_m)$. Let $\theta_m$ be a ring endomorphism of $k[x_1,\dots,x_n]$ such that $\theta_m(x_i)=x_i$ for $i\in\{1,\dots,m\}$, while $\theta_m(x_i)=0$ for $i>m$. Then, for the grevlex monomial ordering
\[\LT(\langle f_1,\dots,f_m\rangle)=\LT(\theta_m(\langle f_1,\dots,f_m\rangle))\cdot\langle x_{m+1},\dots,x_n\rangle.\]
\end{proposition}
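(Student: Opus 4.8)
The plan is to prove that the grevlex initial ideal $\langle\LT(\mathcal I)\rangle$ of $\mathcal I:=\langle f_1,\dots,f_m\rangle$ is the extension to $k[x_1,\dots,x_n]$ of the grevlex initial ideal of the ideal $J:=\theta_m(\mathcal I)=\langle\theta_m(f_1),\dots,\theta_m(f_m)\rangle$ of $k[x_1,\dots,x_m]$; writing $\tilde J\subseteq k[x_1,\dots,x_n]$ for the monomial ideal generated by the set $\LT(J)$, this is exactly the right-hand side of the statement (the staircase of $\mathcal I$ is the product of that of $\theta_m(\mathcal I)$ with all monomials in $x_{m+1},\dots,x_n$). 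I would establish the single inclusion $\tilde J\subseteq\langle\LT(\mathcal I)\rangle$ by a property of grevlex, and then show that $\tilde J$ and $\langle\LT(\mathcal I)\rangle$ have the same Hilbert function, which upgrades the inclusion to an equality.

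The inclusion rests on a combinatorial feature of grevlex. Since the order is graded it suffices to compare monomials of equal degree, and then, directly from the definition recalled above, any monomial in $x_1,\dots,x_m$ is strictly larger than any monomial of the same degree divisible by one of $x_{m+1},\dots,x_n$: the last nonzero coordinate of the difference of the two exponent vectors occurs at an index $>m$ and is negative. It follows that whenever $g\in k[x_1,\dots,x_n]$ is homogeneous with $\theta_m(g)\neq 0$ one has $\LT(g)=\LT(\theta_m(g))$ --- writing $g=\theta_m(g)+r$ with every monomial of $r$ divisible by some $x_j$, $j>m$, the monomial $\LT(\theta_m(g))$ dominates both the remaining monomials of $\theta_m(g)$ and, being of the same degree and supported on $x_1,\dots,x_m$, all monomials of $r$. (Grevlex on $k[x_1,\dots,x_n]$ moreover restricts to grevlex on $k[x_1,\dots,x_m]$, so $\LT$ is unambiguous on $J$.) Now, since $\theta_m$ is degree-preserving, any homogeneous $h\in J$ of degree $d$ equals $\theta_m(g)$ for some homogeneous $g\in\mathcal I_d$, whence $\LT(h)=\LT(g)\in\LT(\mathcal I)$; so every monomial generator of $\tilde J$ lies in $\langle\LT(\mathcal I)\rangle$ and $\tilde J\subseteq\langle\LT(\mathcal I)\rangle$.

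For the Hilbert-function count I would argue as follows. By Proposition~\ref{prop6} the sequence $(f_1,\dots,f_m,x_{m+1},\dots,x_n)$ is regular, hence so is its prefix $(f_1,\dots,f_m)$, and Proposition~\ref{prop:hilbertseries} gives that $k[x_1,\dots,x_n]/\mathcal I$ has Hilbert series $\prod_{i=1}^m(1-z^{d_i})/(1-z)^n$. On the other hand $k[x_1,\dots,x_n]/\langle f_1,\dots,f_m,x_{m+1},\dots,x_n\rangle$ is, as a graded ring, just $k[x_1,\dots,x_m]/J$, and applying Proposition~\ref{prop:hilbertseries} to the regular sequence $(f_1,\dots,f_m,x_{m+1},\dots,x_n)$ shows that the Hilbert series of $k[x_1,\dots,x_m]/J$ is $\prod_{i=1}^m(1-z^{d_i})/(1-z)^m$. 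Since $k[x_1,\dots,x_n]/\tilde J\cong\bigl(k[x_1,\dots,x_m]/\langle\LT(J)\rangle\bigr)\otimes_k k[x_{m+1},\dots,x_n]$, and since passing to an initial ideal preserves the Hilbert function (the leading monomials of a homogeneous basis of each graded piece form a basis of the corresponding graded piece of the initial ideal, exactly as in the Macaulay-matrix discussion above), the Hilbert series of $k[x_1,\dots,x_n]/\tilde J$ is $\prod_{i=1}^m(1-z^{d_i})/(1-z)^m\cdot(1-z)^{-(n-m)}$, which coincides with that of $k[x_1,\dots,x_n]/\mathcal I$, hence with that of $k[x_1,\dots,x_n]/\langle\LT(\mathcal I)\rangle$. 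An inclusion of homogeneous ideals with equal Hilbert functions is an equality, so $\tilde J=\langle\LT(\mathcal I)\rangle$, which is the assertion.

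The step I expect to be the main obstacle is the grevlex comparison of the second paragraph: it is the only place where the choice of term order enters, and it is precisely what makes $\theta_m$ commute with the leading-term operation. The one other point requiring care is pinning down the Hilbert series of $k[x_1,\dots,x_m]/J$, for which Proposition~\ref{prop6} --- that is, the Noether position hypothesis --- is exactly what is needed; once those two ingredients are in place the rest is bookkeeping with Hilbert series.
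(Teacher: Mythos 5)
Your proof is correct, but it takes a genuinely different route from the paper's. Both arguments begin from the same grevlex observation, that $\LT(g)=\LT(\theta_m(g))$ whenever $\theta_m(g)\neq 0$ (which gives the easy inclusion), and both invoke Proposition~\ref{prop6} to turn the Noether-position hypothesis into regularity of $(f_1,\dots,f_m,x_{m+1},\dots,x_n)$. After that the two proofs diverge. The paper establishes the converse inclusion directly and constructively: starting from $f\in\mathcal I$ with leading monomial $M$, it takes the largest index $l$ with $x_l\mid M$, writes $f=x_l^{\alpha_l}g_l+x_{l+1}g_{l+1}+\dots+x_n g_n$, and uses the fact (drawn from Proposition~\ref{prop:hilbertseries} via Lemma~\ref{lemma:zero-div}) that $x_l$ is a non-zero-divisor modulo $\mathcal I+\langle x_{l+1},\dots,x_n\rangle$ to "peel off" $x_l^{\alpha_l}$, then descends by induction on $l$ until $\LT$ lies in $\mathcal T^m$. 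Your argument instead proves only one inclusion and then shows that the two homogeneous ideals have the same Hilbert series -- computing the series of $k[x_1,\dots,x_n]/\mathcal I$ and of $k[x_1,\dots,x_m]/\theta_m(\mathcal I)$ from regularity, and of $k[x_1,\dots,x_n]/\tilde J$ from the tensor decomposition and the invariance of the Hilbert function under passage to the initial ideal -- so the inclusion is automatically an equality. Your route stays entirely within the Hilbert-series toolkit that Section~\ref{sec1} sets up and avoids the somewhat delicate induction on $l$ (in particular the step in the paper deducing $g_l\in\mathcal I$ from $g_l\equiv 0\bmod\mathcal I+\langle x_{l+1},\dots,x_n\rangle$, which really means that some preimage under $\theta_l$ lies in $\mathcal I$ with the same leading term); on the other hand, the paper's proof is explicit and exhibits, for each $f\in\mathcal I$, a concrete $g\in\mathcal I$ with the claimed leading monomial, which is sometimes a more useful statement. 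One harmless point you elided: Proposition~\ref{prop6} assumes $\mathcal I\neq\langle 1\rangle$; in the degenerate case $\mathcal I=\langle 1\rangle$ the statement is immediate.
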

In other words, the leading terms of the elements of the reduced Gr\"obner basis do not depend on the variables~$(x_{m+1},\dots,x_n)$.

\begin{proof}
  Let $\mathcal{I}=\langle f_1,\dots,f_m\rangle$. The inclusion
  $\LT(\mathcal{I}) \supset \LT(\theta_m(\mathcal{I}))\cdot\langle
  x_{m+1},\dots,x_n\rangle$ follows from the fact that for the grevlex
  monomial ordering, when~$\theta_m(f)\neq0$,
  $\LT(f)=\LT(\theta_m(f))$.

  Conversely let~$f\in\mathcal{I}$ and let~$M=x_1^{\alpha_1}\dotsm
  x_n^{\alpha_n}$ be its leading monomial for the grevlex ordering. We
  have to prove that there exists~$g\in\mathcal{I}$ with leading
  monomial~$x_1^{\alpha_1}\dotsm x_m^{\alpha_m}$.  Since the ideal is
  homogeneous, we can assume $f$ to be homogeneous as well.  Let~$l$
  be the largest index such that~$x_l|M$. By definition of the grevlex
  ordering and the fact that $M$ is the leading monomial of $f$, there
  exist homogeneous polynomials $g_l, \dots, g_n\in k[x_1,\dots,x_n]$ 
  such that
  \begin{equation}
  f = x_l^{\alpha_l} g_l+ x_{l+1}g_{l+1}+\dots+x_ng_n,\qquad \text{$g_l\in
  k[x_1,\dots,x_{l}]\backslash \{0\}$ 
 and  $\LT(g_l) = x_1^{\alpha_1}\dotsm x_{l-1}^{\alpha_{l-1}}$.}
\end{equation}
By Proposition~\ref{prop6}, the sequence $(f_1,\dots,
f_m,x_{m+1},\dots,x_n)$ is regular.  If~$l>m$, then \(f\equiv x_l^{\alpha_l} g_l\equiv0\mod \mathcal{I}+\langle
x_{l+1},\dots,x_{n}\rangle\) and since from
Proposition~\ref{prop:hilbertseries}  $x_l$ is not a zero-divisor
in~$k[x_1,\dots,x_n]/(\mathcal{I}+\langle
x_{l+1},\dots,x_{n}\rangle)$ we deduce successively that  \(g_l\equiv0\mod \mathcal{I}+\langle
x_{l+1},\dots,x_{n}\rangle\) and  \(g_l\equiv0\mod \mathcal{I}\).
  Hence, starting from \(f\in\mathcal{I}\) such that
   \(\LT(f)\in k[x_1,\ldots,x_l]\) with \(l>m\), we obtain \(g_l \in\mathcal{I}\)  such that \(\LT(f)=x_l^{\alpha_l} \LT(g_l)\) and \(\LT(g_l)\in k[x_1,\ldots,x_{l-1}]\). By induction on \(l\) we can find a polynomial \(g\in\mathcal{I}\) such that \(\LT(f)=x_{m+1}^{\alpha_{m+1}}\dotsm x_l^{\alpha_l}\LT(g)\) and \(\LT(g)\in k[x_1,\ldots,x_{m}]\). This proves the converse inclusion.
\end{proof}

In view of the incremental nature of the $F_5$ algorithm, an even stronger property will be useful in our considerations.
\begin{definition}
The variables $(x_1,\dots,x_n)$ are in \emph{simultaneous Noether position} with respect to the system $(f_1,\dots,f_m)$ when the variables $(x_1,\dots,x_i)$ are in Noether position with respect to $(f_1,\dots,f_i)$ for all $i\in\{1,\dots,m\}$.
\end{definition}

\begin{example}
  The variables $(x,y,z,h)$ in Example~\ref{ex:SNP} are in
  simultaneous Noether position with respect to the system
  $(f_1,f_2,f_3)$ from Equation~\eqref{eq:circles}.
\end{example}
Again, this situation is generic for regular systems and can be reached by a linear change of coordinates if the field is sufficiently large.

\section{Signature-based Gröbner basis computations: the $F_5$ Algorithm}\label{sec2}
\subsection{Description of the Matrix-$F_5$ algorithm}
\label{sec:desc_F5}

Faugere's  $F_5$ algorithm (\cite{Faugere02}) is designed so that it ensures that no ``useless'' reduction to~$0$ is performed when the input system is regular.
We now describe a matrix version of $F_5$ that is
well-suited to a complexity analysis. The main difference with the original algorithm is that the maximal degree occurring in the computation is given as an input of the algorithm.
As in Faugere's $F_4$ algorithm  (\cite{Faugere99}), linear algebra is used to reduce the polynomials. The resulting algorithm is very easy to implement. It is probably somewhat less efficient than the original~$F_5$ on most practical examples, but it lets us compute an upper bound on the complexity of $F_5$. It is this matrix variant that was used with success by~\cite{FaJo03}.

In order to keep track of the polynomials that lead to the different rows of the matrices encountered during the algorithm, it is convenient to view a matrix $(M)$ as a map $(s,t)\in S\times T\mapsto M_{s,t}\in k$
where $S$ is a finite subset of $\mathbb{N}\times {\mathcal T}$ and $T$ a finite subset of ${\mathcal T}$
ordered  using a graded ordering. A row indexed by~$s=(i,\tau)$ will be used to represent a polynomial obtained as the sum of $\tau f_i$ and some other ``smaller'' polynomials in~$\mathcal{I}$; this index~$s$ is the \emph{signature} of the corresponding polynomial. A row in the matrix $M$ is specified by its signature $s$, and we identify the vector $\operatorname{Row}(M,s) = [M_{s,t}\,|\,t\in T]$ and the polynomial $\sum_{t\in T}
M_{s,t} t$; the leading term of a row is the leading term of the corresponding
polynomial. We fix the following notation: $\operatorname{Rows}(M)=S$ and $\LT(M)$ is the set of leading terms of
all the rows of $M$. A \emph{valid} elementary row operation on $M$ consists in replacing the row
$s\in S$ by the linear combination $\operatorname{Row}(M,s) \leftarrow \operatorname{Row}(M,s) +
\lambda \operatorname{Row}(M,s')$ where $\lambda \in k$, $s'\in S$ and the {\em additional condition} that $s' =
(j',u') < s=(j,u)$ (i.e., $j'<j$ or ($j=j'$ and $u'\prec u$)). The index of the line is unchanged. 
We denote by $\tilde{\mathcal{M}}_{d,i}$ the result of Gaussian elimination applied to the matrix
$\mathcal{M}_{d,i}$ using a sequence of \emph{valid} elementary row operations. 

There are two distinct ways of
performing a valid Gaussian elimination: either we perform reductions
only for the leading term of each row, in which case we call the
reduction a \emph{top-reduction}, or we perform more valid reductions
so that each column containing a leading coefficient has zeros
elsewhere below, in which case we call the reduction a
\emph{full-reduction}. The complexity analysis of this paper is done
in the top-reduction case, and in Section~\ref{sec3} an experimental
comparison with the full-reduction case is given.

The algorithm matrix-$F_5$ constructs matrices incrementally in the degree and the number of
polynomials. Let $d$ be the current degree and $i$ the current number of polynomials (in other words
we are computing a Gr\"obner basis of $\langle f_1,\ldots,f_i\rangle$ truncated in degree $d$).  The algorithm constructs a
matrix $\mathcal{M}_{d,i}$ obtained from the  Macaulay matrix $\Mac d i$ by removing selected rows.
With the previous notation, $\mathcal{M}_{d,i}$ is a map $S\times
\mathcal{T}_d\mapsto k$ such that $S$ is a subset of $\{1,\ldots,i\}\times{\mathcal T}$.

\cite{Faugere02} defines the signature of a polynomial and uses it to give a new criterion to
remove useless computations. In the matrix-$F_5$ algorithm the signatures become the indices 
of the rows and the original criterion translates as:
\begin{proposition}[$F_5$ criterion]\label{F5-crit}
If $t$ is the leading term of $\operatorname{Row}(\tilde{\mathcal{M}}_{d-d_i,i-1},s)$ where
$s<(i,1)$ then the row indexed by $(i,t)$ 
belongs to the vector space generated by the rows of $\Mac{d}{i}$ having smaller index.
\end{proposition}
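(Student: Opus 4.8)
The plan is to unwind the meaning of the signature $(i,t)$ and of the valid row operations, and to exhibit explicitly the polynomial with signature $(i,t)$ as a combination of rows of $\Mac{d}{i}$ of strictly smaller index. First I would recall that a row indexed by $(i,t)$ in $\Mac{d}{i}$ represents the polynomial $t f_i$, where $t\in\mathcal{T}_{d-d_i}$; here $t$ is given to us as $\LT(\operatorname{Row}(\tilde{\mathcal{M}}_{d-d_i,i-1},s))$ for some signature $s=(j,u)<(i,1)$, meaning $j\le i-1$. Since $\tilde{\mathcal{M}}_{d-d_i,i-1}$ is obtained from $\mathcal{M}_{d-d_i,i-1}$ by valid elementary row operations, the row with signature $s$ is, by definition of those operations, of the form $g = u f_j + (\text{a } k\text{-linear combination of rows } \tau f_{j'} \text{ with } (j',\tau)<(j,u))$. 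In particular $g\in\langle f_1,\dots,f_{i-1}\rangle$, it is homogeneous of degree $d-d_i$, and its leading term is $t$ (with some nonzero coefficient $c\in k$, which I normalise to $1$).

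The key step is then to multiply this relation by $f_i$: the polynomial $t f_i$ can be written as
\[
t f_i = g f_i - (g - t) f_i,
\]
and I would argue that both terms on the right-hand side are $k$-linear combinations of rows of $\Mac{d}{i}$ of index strictly below $(i,t)$. For the first term, $g f_i$ is a $k$-linear combination of polynomials $(\tau f_{j'}) f_i$ with $j'\le i-1$; after regrouping, each such contribution is a multiple of $f_{j'}$ by a monomial, i.e.\ a row of $\Mac{d}{j'}\subset\Mac{d}{i}$ whose index $(j',\cdot)$ satisfies $j'<i$, hence is smaller than $(i,t)$. For the second term, $g-t$ has all its monomials $\prec t$ in the grevlex order (since $t=\LT(g)$), so $(g-t)f_i$ is a $k$-linear combination of rows $(i,\tau)$ of $\Mac{d}{i}$ with $\tau\prec t$, again of index strictly below $(i,t)$. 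Combining, $t f_i$ — which is exactly $\operatorname{Row}(\Mac{d}{i},(i,t))$ — lies in the span of rows of $\Mac{d}{i}$ of smaller index, which is the assertion.

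The main obstacle I anticipate is bookkeeping rather than conceptual: one must be careful that the monomials $\tau$ appearing when one expands $g$ indeed have degree $d-d_i$ (so that $\tau f_{j'}$ is a genuine row of $\Mac{d}{j'}$ sitting inside $\Mac{d}{i}$, not merely an element of the ideal), and that the rewriting $g f_i = \sum (\text{monomial})\cdot f_{j'}$ does not accidentally reintroduce rows of index $(i,\cdot)$ or index $\ge(i,t)$; this is where the hypothesis $s<(i,1)$, i.e.\ $j<i$, is used crucially, together with the graded structure forcing all intermediate polynomials to live in degree $d$. I would also note explicitly the (minor) point that valid elementary row operations preserve the leading term being $t$ but may change lower-order monomials, which is why $g$ rather than $u f_j$ must be used in the argument; using $u f_j$ alone would not suffice because $\LT(u f_j)$ need not equal $t$ once reductions from $f_1,\dots,f_{j-1}$ have acted.
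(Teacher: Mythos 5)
Your proof is correct and follows essentially the same route as the paper's: both hinge on the decomposition $tf_i = hf_i + (t-h)f_i$ (you write it as $gf_i - (g-t)f_i$, which is identical), with the first term landing in the span of the rows $\tau f_{j'}$, $j'<i$, and the second in the span of the rows $(i,\tau')$ with $\tau'\prec t$. The only cosmetic difference is that you identify $h$ concretely as the normalised row of $\tilde{\mathcal{M}}_{d-d_i,i-1}$ indexed by $s$, whereas the paper just takes any $h\in\langle f_1,\dots,f_{i-1}\rangle_{d-d_i}$ with $\LT(h)=t$.
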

\begin{proof}The hypothesis is that $t\in\LT(\langle f_1,\dots,f_{i-1}\rangle_{d-d_i})$, say $t=\LT(h)$ with $h=\sum_{k=1}^{i-1}h_k f_k$. This implies that $t f_i = \sum_{k=1}^{i-1}  f_i h_kf_k + (t-h) f_i$, where the first term belongs to $\langle \operatorname{Row}\Mac{d}{i-1} \rangle$ and the last one is a linear combination of rows of $\Mac{d}{i}$ having smaller index, as $\LT(t-h) \prec \LT(h)$.
\end{proof}

We now describe the matrix-$F_5$ algorithm. Here the order is any monomial ordering. It enters the algorithm through the function~$\LT$.
\begin{description}
\item[Algorithm matrix-$F_5$]\ \\
Input: homogeneous polynomials $(f_1,\dots,f_m)$ with degrees $d_1\leq\cdots\leq d_m$;\\
\phantom{Input:} a maximal degree $D$. \\
Output: The elements of degree at most~$D$ of the reduced\\
\phantom{Input:} Gr\"obner bases of $(f_1,\dots,f_i)$, for $i=1,\dots,m.$ \\
\ligne0 {\bf for} $i$ {\bf from} 1 {\bf to} $n$ {\bf do}
$G_i:=\emptyset$; \textbf{end for} // \emph{initialise the Gröbner Bases}
$G_i$ of $(f_1,\dots,f_i)$.
\\
\ligne0 {\bf for} $d$ {\bf from} $d_1$ {\bf to} $D$ {\bf do}\\
\ligne5   $\mathcal{M}_{d,0} := \emptyset$, $\tilde{\mathcal{M}}_{d,0} := \emptyset$\\
\ligne5   {\bf for} $i$ {\bf from} $1$ {\bf to} $m$ {\bf do}\\
\ligne{10}    {\bf if} $d<d_i$ {\bf then} ${\mathcal{M}}_{d,i} := {\mathcal{M}_{d,i-1}}$\\
\ligne{10}    {\bf else if} $d=d_i$ {\bf then}\\
\ligne{15}      $\mathcal{M}_{d_i,i}$ := add the new row $f_i$ to $\tilde{\mathcal{M}}_{d_i,i-1}$  with index $(i,1)$\\
\ligne{10}    {\bf else}\\
\ligne{15}      $\mathcal{M}_{d,i} := \tilde{\mathcal{M}}_{d,i-1}$ \\
\ligne[\label{ligne:10}]{15}       Crit $ := \LT(\tilde{\mathcal{M}}_{d-d_i,i-1})$ \\
\ligne{15}      {\bf for} $f$ {\bf in} $\operatorname{Rows}(\mathcal{M}_{d-1,i})\backslash \operatorname{Rows}(\mathcal{M}_{d-1,i-1})$ {\bf do}\\
\ligne{20}         $(i,u):=\operatorname{index}(f)$, with $u=x_{j_1} \cdots x_{j_{d-d_i-1}},$\\
\ligne{50}             and  $1\leq j_1 \leq \cdots \leq j_{d-d_i-1}\leq n$\\
\ligne[\label{ligne:1}]{20}         {\bf for} $j$ {\bf from} $j_{d-d_i-1}$ {\bf to} $n$ {\bf do}\\
\ligne{25}            {\bf if} $u x_j\not\in\operatorname{Crit}$ {\bf then}\\
\ligne[\label{ligne:16}]{30}               add the new row $x_j f$ with index $(i, u x_j)$ in $\mathcal{M}_{d,i}$\\
\ligne{25}            {\bf end if}\\
\ligne{20}         {\bf end for}\\
\ligne{15}      {\bf end for}\\
\ligne{10}      {\bf end if}\\
\ligne{10}    Compute $\tilde{\mathcal{M}}_{d,i}$ by Gaussian elimination from ${\mathcal{M}}_{d,i}$\\
\ligne[\label{ligne:22}]{10}    Add to $G_i$ all rows of $\tilde{\mathcal{M}}_{d,i}$ not reducible by $\LT(G_i)$\\
\ligne{5}      {\bf end for}\\
\ligne0       {\bf end for}\\
\ligne0 {\bf return} $[G_i\,|\, i=1,\ldots,m]$
\end{description}
The \textbf{for} loop of
line~\ref{ligne:1} constructs the matrix $\mathcal M_{d,i}$
containing all the polynomials $x_1^{\alpha_1}\dotsm
x_n^{\alpha_n}f_i$ with $\alpha_1+\dotsm +\alpha_n = d-d_i$ (except
some that reduce trivially to zero). In order to avoid redundant computations,
these are constructed from the rows of the previous matrix $\mathcal
M_{d-1,i}$ by multiplying all rows by all variables. A row indexed by
$(i,x_1^{\alpha_1}\dotsm x_j^{\alpha_j})$ with $\alpha_j\neq 0$ can
arise from several rows in $\mathcal M_{d-1,i}$, we choose to
construct it from the row indexed by $(i,u)$ in $\mathcal M_{d-1,i}$
with $u=x_1^{\alpha_1}\dotsm x_j^{\alpha_j-1}$ and multiply it by
$x_j$, the largest variable occurring in $u$. This insures
that every row comes from exactly one row in the previous matrix.

\begin{example}\label{ex:matrices}
  Algorithm matrix-$F_5$ over Example~\ref{ex:SNP} constructs the
  following matrices. In degree~2,
  \[\mathcal M_{2,3} = \bordermatrix{ & {x}^{2} & xy & {y}^{2} & xz &
    yz & {z}^{2} & hx & yh & zh & {h}^{2}\cr f_1 & 1 & & 1 & -2 & -2 &
    1 & & & & 1 \cr f_2 & 1 & 1 & & & 1 & -1 & & & & -2 \cr f_3 & 1 &
    & -1 & & 2 & -2 \cr },\]
where only nonzero entries are displayed. Gaussian reduction yields
  \[\tilde{\mathcal M_{2,3}} = \bordermatrix{ & {x}^{2} & xy & {y}^{2}
    & xz & yz & {z}^{2} & hx & yh & zh & {h}^{2}\cr f_1 & 1 & & 1 & -2
    & -2 & 1 & & & & 1 \cr f_2 & & 1 & -1 & 2 & 3 & -2 & & & & -3 \cr
    f_3 & & & 2 & -2 & -4 & 3 & & & & 1 \cr }.\]
From $\tilde{\mathcal M_{2,3}}$, it follows that the indices and leading terms of the elements in $G_3$ are
  \[\{((1,1), x^2), ((2,1), xy), ((3,1), y^2)\}.\]
  Next, in degree 3, $\tilde{\mathcal M_{3,3}}$ contains (in that order) the columns 
\[{x}^{3},{x}^{2}y,{y}^{2}x,{y}^{3},{x}^{2}z,zxy,z{y}^{2},x{z}^{2},y{z}^{2},{z}^{3},{x}^{2}h,xyh,{y}^{2}h,xzh,yzh,{z}^{2}h,{h}^{2}x,{h}^{2}y,{h}^{2}z,{h}^{3}\]
  and the rows with indices and leading terms
  \[ \begin{array}[h]{@{}l@{~}llllllllllll} (\text{ind.})& (1,h)&
    (1,z)& (1,y)&
    (1,x)& (2,h)& (2,z)& (2,y)& (2,x)& (3,h)& (3,z)& (3,y)& (3,x)\\
    (\LT)& x^2h & x^2z & x^2y & x^3& xyh& xyz& xy^2& \underline{y^3}&
    y^2h& y^2z& \underline{xz^2}& \underline{yz^2}
\end{array}
\]
The underlined leading terms are those inserted into $G_3$ by Algorithm
matrix-$F_5$ in line~\eqref{ligne:22}. Degree~4 is the first time the $F_5$ criterion is used. The set Crit of
line~\eqref{ligne:10} is empty for $i=1$ by convention, but it contains $x^2$ for $i=2$ and
$x^2,xy$ for $i=3$.  Thus in line~\eqref{ligne:16},
all rows $(i,m)$ with $i = 1, 2, 3$ and $m$ a monomial of degree
2 are added to $\mathcal M_{4,2}$ and $\mathcal M_{4,3}$, except the rows $(2,x^2)$, $(3, xy)$ and $(3,x^2)$. The matrix
$\tilde{\mathcal M_{4,3}}$ contains $\binom{7}{4} = 35$ columns and
$3\binom52-3 = 27$ rows. The rows that are reduced during the Gaussian
elimination and that are added to the Gröbner basis are $((3,y^2),
z^4)$. No reduction to~0 has occurred and the Gröbner bases are
\[G_1 = \{((1,1), x^2 + y^2 -2xz -2yz + z^2 + h^2)\},\]
\[G_2 = G_1 \cup
\left\{
    \begin{array}{ll}
((2,1), &xy-{y}^{2}+2\,xz+3\,yz-2\,{z}^{2}-3\,{h}^{2})\\
((2,x), & 2\,{y}^{3}-7\,xyz-3\,{y}^{2}z-2\,x{z}^{2}-y{z}^{2}+2\,{z}^{3}+3\,x{h}^{2}+4\,y{h}^{2}+2\,z{h}^{2})
\end{array}
\right\}, \]
\[G_3 = G_2 \cup
\left\{
    \begin{array}{ll}
((3,1), & 2\,{y}^{2}-2\,xz-4\,yz+3\,{z}^{2}+{h}^{2})\\
((3,y), & 4\,x{z}^{2}+3\,y{z}^{2}-2\,{z}^{3}+3\,x{h}^{2}+3\,y{h}^{2}-11\,z{h}^{2}) \\
((3,x), & 3\,y{z}^{2}-6\,{z}^{3}+11\,x{h}^{2}-5\,y{h}^{2}-3\,z{h}^{2}) \\
((3,y^2), & 3\,{z}^{4}+4\,xz{h}^{2}+12\,yz{h}^{2}-7\,{z}^{2}{h}^{2}-12\,{h}^{4}) \\
\end{array}
\right\}. \]

\end{example}

The main property of this algorithm is given in the following theorem.
\begin{theorem}\label{thmf5}
The algorithm matrix-$F_5$ computes the elements of degree at most~$D$ of the reduced Gr\"obner bases of $\langle f_1,\ldots,f_i\rangle$, $i=1,\dots,m$.
Moreover if $(f_1,\ldots,f_m)$ is a regular sequence then all the matrices $\mathcal{M}_{d,i}$ have full rank.
\end{theorem}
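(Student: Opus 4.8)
The plan is to prove the two assertions of Theorem~\ref{thmf5} separately, handling first the correctness of the output and then the full-rank property for regular sequences, since the latter is the key structural fact that makes the $F_5$ criterion both sound and complete.

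For correctness, I would argue by induction on the pair $(d,i)$ ordered lexicographically, matching exactly the loop structure of the algorithm. The invariant to maintain is that $\tilde{\mathcal M}_{d,i}$, read as a set of polynomials with signatures, spans the full space $\langle f_1,\dots,f_i\rangle_d=\mathcal I^{(i)}_d$ — equivalently, that the rows we discarded relative to the Macaulay matrix $\Mac{d}{i}$ are redundant. The only rows removed are those killed by the $F_5$ criterion (Proposition~\ref{F5-crit}): when $u x_j\in\operatorname{Crit}=\LT(\tilde{\mathcal M}_{d-d_i,i-1})$, the row $(i,ux_j)$ lies in the span of lower-indexed rows of $\Mac{d}{i}$, so dropping it does not change the row space. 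Hence $\tilde{\mathcal M}_{d,i}$ still generates $\mathcal I^{(i)}_d$, and because the columns are sorted by the monomial order and Gaussian elimination is performed without column pivoting, the leading terms of $\tilde{\mathcal M}_{d,i}$ generate $\LT(\mathcal I^{(i)}_d)$, exactly as recalled in the Macaulay-matrix subsection. Collecting, over $d_i\le d\le D$, the rows not reducible by the leading terms already in $G_i$ (line~\ref{ligne:22}) then yields precisely the degree-$\le D$ part of the reduced Gröbner basis of $\langle f_1,\dots,f_i\rangle$, by the standard reconstruction argument; one must also check that the incremental construction of $\mathcal M_{d,i}$ from $\mathcal M_{d-1,i}$ (each row $x_jf$ arising from a unique parent) does not lose any needed multiple $tf_i$ with $\deg t=d-d_i$, which follows from the bookkeeping explained after the pseudocode.

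For the full-rank claim, assume $(f_1,\dots,f_m)$ is regular. I would again induct on $i$, and within a fixed $i$ track the degree $d$. The rank of $\mathcal M_{d,i}$ equals its number of rows iff no nontrivial valid linear combination of its rows vanishes; equivalently, the rows added at stage $i$ that survive the criterion must be linearly independent over the row space coming from $\langle f_1,\dots,f_{i-1}\rangle_d$, which is $\tilde{\mathcal M}_{d,i-1}$ and has full rank by the induction hypothesis. The heart of the matter is a dimension count: by Proposition~\ref{prop:hilbertseries}, regularity of $(f_1,\dots,f_i)$ is equivalent to $\HF_{\mathcal I^{(i)}}(d)=\HF_{\mathcal I^{(i-1)}}(d)-\HF_{\mathcal I^{(i-1)}}(d-d_i)$ for all $d$ — equivalently, multiplication by $f_i$ is injective modulo $\mathcal I^{(i-1)}$. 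So the dimension of the new part $\mathcal I^{(i)}_d/\mathcal I^{(i-1)}_d$ equals $\dim(k[x]_{d-d_i}/\mathcal I^{(i-1)}_{d-d_i})=\HF_{\mathcal I^{(i-1)}}(d-d_i)$, which is exactly the number of standard monomials of $\langle f_1,\dots,f_{i-1}\rangle$ in degree $d-d_i$. I would then show this number equals the number of rows with index $i$ that the algorithm actually inserts: a row $(i,u)$ is inserted iff $u\notin\operatorname{Crit}=\LT(\mathcal I^{(i-1)}_{d-d_i})$, i.e.\ iff $u$ is a standard monomial of $\langle f_1,\dots,f_{i-1}\rangle$ in degree $d-d_i$. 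Matching these two counts forces the surviving index-$i$ rows to be linearly independent modulo $\tilde{\mathcal M}_{d,i-1}$ — if they were dependent, the new-part dimension would be strictly smaller, contradicting regularity — hence $\mathcal M_{d,i}$ has full rank.

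The main obstacle is the second claim, and within it two delicate points. First, one must verify that the monomials $u$ appearing as indices of the inserted rows are genuinely in bijection with the standard monomials of $\langle f_1,\dots,f_{i-1}\rangle_{d-d_i}$, which requires knowing that $\operatorname{Crit}$ at stage $(d,i)$ equals $\LT(\mathcal I^{(i-1)}_{d-d_i})$ and not merely a subset — this is where the induction on $i$ (giving that $\tilde{\mathcal M}_{d-d_i,i-1}$ already realizes the full leading-term set in degree $d-d_i$) is essential, so the two inductions are genuinely intertwined and must be run simultaneously. Second, one must confirm that the incremental row-construction does not accidentally omit a needed $u$: because the parent of $(i,ux_j)$ is chosen as $(i,u)$ times the largest variable $x_j$ of $ux_j$, a needed row could in principle be lost if its parent $(i,u)$ was itself pruned by the criterion in degree $d-1$; ruling this out requires the fact that $\LT(\mathcal I^{(i-1)})$ is a monomial ideal, so if $u$ is standard then so is its ancestor $u/x_j$, guaranteeing the parent was kept. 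Once these two points are nailed down, both assertions follow by the dimension bookkeeping sketched above.
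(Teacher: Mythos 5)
Your proof of the first statement (correctness of the output) follows essentially the same induction on $(d,i)$ as the paper, with the same two ingredients: the $F_5$ criterion (Proposition~\ref{F5-crit}) justifies dropping rows, and the careful parent bookkeeping (together with the fact that $\LT(\mathcal{I}^{(i-1)})$ is a monomial ideal) guarantees that no needed row $(i,u)$ is lost by pruning its unique ancestor. So no remarks there.

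For the full-rank claim, you take a genuinely different route. The paper argues directly by contradiction at the level of the matrix reductions: if a row indexed $(i,u)$ reduced to zero, the valid row operations would have produced an identity $g_if_i+\dots+g_1f_1=0$ in which the leading term of $g_i$ is the signature monomial $u$, which was kept precisely because $u\notin\LT(\langle f_1,\dots,f_{i-1}\rangle)$; this contradicts regularity at once. Your argument instead is a pure dimension count: the spanning property from the first part gives $\operatorname{span}(\mathcal{M}_{d,i})=\mathcal{I}^{(i)}_d$; the induction hypothesis gives $\dim\mathcal{I}^{(i-1)}_d$ rows of index $<i$; the criterion plus the leading-term correctness of $\tilde{\mathcal{M}}_{d-d_i,i-1}$ gives exactly $\HF_{\mathcal{I}^{(i-1)}}(d-d_i)$ new rows of index $i$; and regularity (really the one-step identity of Lemma~\ref{lemma:zero-div}, rather than Proposition~\ref{prop:hilbertseries} which you cite) says these two numbers sum to $\dim\mathcal{I}^{(i)}_d$, forcing full rank. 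Both arguments are valid. The paper's is more local and structural — it pinpoints the exact reason a zero row cannot occur and is robust even if one only knows the criterion set is a \emph{subset} of the true leading-term ideal; yours is more global and quantitative, it additionally shows that the number of rows of $\mathcal{M}_{d,i}$ equals $\dim\mathcal{I}^{(i)}_d$ exactly, a fact the paper extracts separately in Theorem~\ref{fg_gb}. You are right that your argument forces the two inductions (on $i$ and on $d$) to be run simultaneously, since you need $\operatorname{Crit}$ at stage $(d,i)$ to equal $\LT(\mathcal{I}^{(i-1)}_{d-d_i})$ exactly, not merely be contained in it — that is the main extra burden of the dimension-counting route. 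One small slip in your phrasing of the parent argument: the correct statement is that if $ux_j$ is standard and $x_j$ is its largest variable, then its parent $u$ is also standard (complementary sets of monomial ideals are closed under division), not the other way around.
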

\begin{proof}
The proof of the first statement follows the algorithm: it is an induction on~$d$ and~$i$. For $d=d_1$ and $i=1$, the result is clear. Assuming the induction hypothesis, we now have to prove that the rows of $\mathcal{M}_{d,i}$ generate~$\langle f_1,\dots,f_i\rangle_d$. Then we can deduce that $\LT(\tilde{\mathcal{M}}_{d,i})$ generates~$\LT(\langle f_1,\dots,f_i\rangle_d)$ and the conclusion on~$G_i$ follows.

It is thus sufficient to show that for any $\tau\in\mathcal{T}_{d-d_i}$, the polynomial~$\tau f_i$ is generated by the rows of~$\mathcal{M}_{d,i}$. If $d\le d_i$ the result is clear. Otherwise, let $j$ be the highest index such that $x_j\mid\tau$ and let $u=\tau/x_j$. If $u$ is not the index of a row of $\mathcal{M}_{d-1,i}\setminus\mathcal{M}_{d-1,i-1}$ then by  the induction hypothesis $uf_i$ is generated by the rows of~$\mathcal{M}_{d-1,i-1}$ and therefore $\tau f_i$ is generated by the rows of $\mathcal{M}_{d,i-1}$. This justifies the selection of rows in the loop over $s$. Otherwise, $\tau f_i$ is entered by the algorithm in~$\mathcal{M}_{d,i}$, unless $\tau\in\LT(\tilde{\mathcal{M}}_{d-d_i,i-1})$ since then~$\tau$ is eliminated by the criterion, thanks to  Proposition~\ref{F5-crit}. 

The second part of the theorem is proved by contradiction.
If a row of~$\mathcal{M}_{d,i}$ indexed by~$(i,u)$ reduces to~0, this means that the algorithm has constructed an identity
\[g_if_i+\dots+g_1f_1=0.\]
Moreover, the criterion ensures that $g_i\neq0$ is reduced with respect to $\langle f_1,\dots,f_{i-1}\rangle$. This contradicts the regularity of $(f_1,\dots,f_i)$. 
\end{proof}

Other useful properties of the algorithm matrix-$F_5$ that are needed later are gathered in the following Lemma.
\begin{lemma}\label{lemmaf5}
        \begin{enumerate}
                \item Any row entered by the algorithm matrix-$F_5$ into the matrix~$\mathcal{M}_{d,i}$ represents a polynomial
\[g_if_i+\dots+g_1f_1,\]
where~$g_i$ is reduced with respect to~$\langle f_1,\dots,f_{i-1}\rangle$;
\item if $g\in G_i\setminus G_{i-1}$, then its index $s_g$ has the form~$(i,t)$.
\end{enumerate}
\end{lemma}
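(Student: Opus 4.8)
The plan is to prove both items by induction on the pair $(d,i)$, in the order in which the matrices $\mathcal{M}_{d,i}$ are produced by Algorithm matrix-$F_5$, tracking simultaneously the polynomial carried by each row and its signature. I will freely use the description of leading terms coming from (the proof of) Theorem~\ref{thmf5}: for a homogeneous ideal the leading monomials of the rows of $\tilde{\mathcal{M}}_{e,j}$ are exactly the degree-$e$ monomials of $\LT(\langle f_1,\dots,f_j\rangle)$.

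For item~1, I would observe that a row enters $\mathcal{M}_{d,i}$ in exactly one of three ways. (i) It is inherited: a row of $\mathcal{M}_{d,i-1}$ when $d<d_i$, or of $\tilde{\mathcal{M}}_{d,i-1}$ when $d\ge d_i$; such a row lies in $\langle f_1,\dots,f_{i-1}\rangle_d$, so one takes $g_i=0$, which is reduced. (ii) It is the row $f_i$ inserted with signature $(i,1)$ when $d=d_i$; then $g_i=1$, reduced since $\langle f_1,\dots,f_{i-1}\rangle$ is a proper ideal. (iii) It is a row inserted in line~\ref{ligne:16} with signature $(i,ux_j)$, namely $x_j$ times the row of $\mathcal{M}_{d-1,i}$ indexed by $(i,u)$; a short sub-induction on $d$ (the rows of $\mathcal{M}_{d-1,i}$ of first index $i$ are themselves built this way from $f_i$) shows that row equals $uf_i$, so the new row equals $\tau f_i$ with $\tau=ux_j\in\mathcal{T}_{d-d_i}$, and one takes $g_i=\tau$. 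This $\tau$ is reduced with respect to $\langle f_1,\dots,f_{i-1}\rangle$ because the row is only inserted when $\tau\notin\operatorname{Crit}=\LT(\tilde{\mathcal{M}}_{d-d_i,i-1})$, which by the above is the set of all degree-$(d-d_i)$ monomials of $\LT(\langle f_1,\dots,f_{i-1}\rangle)$. The property then survives the elimination producing $\tilde{\mathcal{M}}_{d,i}$: a valid operation adds to a row~$s$ a multiple of a row~$s'<s$, and if $s'$ has first index $<i$ its $g_i$-part is $0$, whereas if $s'=(i,\tau')$ with $\tau'\prec\tau$ the new $g_i$ is $\tau+\lambda\tau'$, still of leading monomial $\tau$.

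For item~2, suppose $g\in G_i\setminus G_{i-1}$ is added at line~\ref{ligne:22} during step $(d,i)$, as a row of $\tilde{\mathcal{M}}_{d,i}$ of signature $(j,t)$, and assume for contradiction $j<i$. Since $\mathcal{M}_{d,i}$ is $\tilde{\mathcal{M}}_{d,i-1}$ together with the freshly inserted rows of signature $(i,\cdot)$, and since a valid operation can only add to a row of signature below $(i,1)$ rows of still smaller signature (all of first index $<i$), the rows of signature below $(i,1)$ are identical in $\tilde{\mathcal{M}}_{d,i}$ and in $\tilde{\mathcal{M}}_{d,i-1}$; hence $g$ is literally a row of $\tilde{\mathcal{M}}_{d,i-1}$. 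But step $(d,i-1)$ precedes step $(d,i)$, and there the algorithm tested exactly this row against $\LT(G_{i-1})$: if it was irreducible it was added to $G_{i-1}$, so $g\in G_{i-1}$, a contradiction; if it was reducible then $\LT(g)$ is a multiple of $\LT(h)$ for some $h\in G_{i-1}$ with $\deg h<\deg g$, and, using $G_{i-1}\subseteq G_i$, $h\in G_i$, so $g$ is reducible by $\LT(G_i)$ and would not have been added at line~\ref{ligne:22}, again a contradiction. Hence $j=i$ and $s_g=(i,t)$.

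Two facts are used above that are not yet on record: that the rows of signature below $(i,1)$ are untouched when passing from $\mathcal{M}_{d,i}$ to $\tilde{\mathcal{M}}_{d,i}$, and the inclusion $G_{i-1}\subseteq G_i$. The first is immediate from the validity constraint on elementary operations. The second should be obtained by the very same induction on $d$: each row of $\tilde{\mathcal{M}}_{d,i-1}$ that gets added to $G_{i-1}$ reappears verbatim in $\tilde{\mathcal{M}}_{d,i}$, and its leading monomial, being a minimal generator of $\LT(\langle f_1,\dots,f_{i-1}\rangle)$ of degree $d$, is also a minimal generator of $\LT(\langle f_1,\dots,f_i\rangle)$, hence not divisible by any element of the current $\LT(G_i)$, so it is added to $G_i$ as well. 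I expect the only real difficulty to be this bookkeeping — keeping straight, since the degree loop is outermost and the index loop innermost, which version of each $G_i$ is current when a given row is examined — rather than anything conceptual; items~1 and~2 themselves then follow mechanically from the $F_5$ criterion (Proposition~\ref{F5-crit}) and Theorem~\ref{thmf5}.
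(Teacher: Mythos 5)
Your handling of item~1 is correct, and it is also essentially what the paper intends: a new row with signature $(i,\tau)$ carries the polynomial $\tau f_i$ (modulo lower-signature corrections if one tracks the reduced row), the $F_5$ criterion admits it only when $\tau\notin\LT\bigl(\tilde{\mathcal M}_{d-d_i,i-1}\bigr)=\LT\bigl(\langle f_1,\dots,f_{i-1}\rangle_{d-d_i}\bigr)$, and for a monomial of that degree this is exactly the same as being reduced with respect to $\langle f_1,\dots,f_{i-1}\rangle$. The paper's own proof of item~1 simply points back to the proof of Theorem~\ref{thmf5}, so no issue there.

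Item~2, however, has a genuine gap. You reduce it to the claim $G_{i-1}\subseteq G_i$, and you justify that by asserting that a minimal generator of $\LT(\langle f_1,\dots,f_{i-1}\rangle)$ is also a minimal generator of $\LT(\langle f_1,\dots,f_i\rangle)$. This is false, and so is $G_{i-1}\subseteq G_i$. For a concrete counterexample (even in simultaneous Noether position), take $n=3$, grevlex, $f_1=x_1^2$, $f_2=x_1x_2+x_2^2$, and $f_3$ a generic quadric (so that after reduction by $f_1,f_2$ its leading term becomes $x_2^2$). One checks $\LT(\langle f_1,f_2\rangle)=\langle x_1^2,\,x_1x_2,\,x_2^3\rangle\cdot k[x_1,x_2,x_3]$, so $x_2^3$ is a minimal generator there and a corresponding row of signature $(2,x_1)$ is added to $G_2$ at degree~$3$. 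But $x_2^2\in\LT(\langle f_1,f_2,f_3\rangle)$ already at degree~$2$, so $x_2^3$ is not a minimal generator of the bigger leading-term ideal, and at step $(3,3)$ the very same row $(2,x_1)$ is rejected from $G_3$. Thus $G_2\not\subseteq G_3$, and your reducibility step ``$h\in G_{i-1}\Rightarrow h\in G_i$'' does not go through.

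The conclusion of item~2 is nevertheless true, and the fix is local to the place where you invoked $G_{i-1}\subseteq G_i$: you do not need $h$ itself to lie in $G_i$, only that some element of $\LT(G_i)$ divides $\LT(h)$. Indeed $h$, having first index $\le i-1$, is also a row of $\tilde{\mathcal M}_{\deg h,i}$ with the same leading term; at step $(\deg h,i)$, line~\ref{ligne:22} guarantees that either $h$ was added to $G_i$, or some $h''\in G_i$ with $\deg h''<\deg h$ satisfies $\LT(h'')\mid\LT(h)$. Either way one gets a divisor of $\LT(h)\mid\LT(g)$ inside $\LT(G_i)$ already at the moment $(d,i)$, contradicting that $g$ was added at that step. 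With this one-line replacement your proof of item~2 is sound; the rest (rows of signature below $(i,1)$ being untouched by valid operations when passing from $\mathcal M_{d,i}$ to $\tilde{\mathcal M}_{d,i}$, so that $g$ is literally a row of $\tilde{\mathcal M}_{d,i-1}$) is fine.
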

\begin{proof} The first property comes from the proof of the previous theorem.
The second one comes from the fact that the algorithm works incrementally with respect to~$i$.
\end{proof}


\subsection{Structure theorem for Grevlex Bases with variables in Simultaneous Noether Position}
The estimate in Proposition~\ref{propupper} describes precisely the shape of the final Gr\"obner basis, but it does not take into account
any specificity of the $F_5$ algorithm. Hence we first study in more detail the structure of
grevlex bases computed by \(F_{5}\), giving the shape of the signatures associated to those polynomials. We further restrict to a special
situation, namely when the variables are in \emph{simultaneous Noether
  position}.
The following proposition will play a crucial role when estimating precisely the number of operations of the matrix-$F_5$ algorithm.
\begin{proposition}\label{cor:structure} [Structure of the $F_5$-bases] Let $(f_1,\dots,f_m)$ be a
  homogeneous system for which the variables~$(x_1,\dots,x_n)$ are in
  simultaneous Noether position. Let $G_1,\dots,G_m$ be the result of
  the matrix-$F_5$ algorithm applied to this system for the grevlex
  ordering. Then, for all $((j,t),g)\in G_i$, one has $j\le i$,
  $\LT(g)\in\mathcal{T}^j$ and $t\in\mathcal{T}^{j-1}$.
\end{proposition}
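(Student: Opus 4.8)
The plan is an induction on~$i$. That $j\le i$ is immediate from Lemma~\ref{lemmaf5}(2): an element of $G_i$ whose signature is $(j,t)$ was already inserted at step~$j$, so $((j,t),g)\in G_j\setminus G_{j-1}$ with $j\le i$, and it suffices to treat this case. By Theorem~\ref{thmf5}, $g$ belongs to the reduced Gr\"obner basis of $\langle f_1,\dots,f_j\rangle$; simultaneous Noether position puts $(x_1,\dots,x_j)$ in Noether position with respect to $(f_1,\dots,f_j)$, so Proposition~\ref{prop:LJ} gives $\LT(g)\in\mathcal{T}^j$. The insertion rule of line~\ref{ligne:22} together with Theorem~\ref{thmf5} also shows that $\LT(g)$ is a \emph{minimal} generator of $\LT(\langle f_1,\dots,f_j\rangle)$ and does not lie in $\LT(\langle f_1,\dots,f_{j-1}\rangle)$: a non-minimal or ``old'' leading term would be reducible by an element of~$G_j$ of strictly lower degree already present when $g$ is processed.

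For the signature, the input is Lemma~\ref{lemmaf5}(1): the row of index $(j,t)$ represents $g_jf_j+\dots+g_1f_1$ with $g_j$ reduced modulo $\langle f_1,\dots,f_{j-1}\rangle$, and since valid row operations add only rows of smaller signature, its $f_j$-component is $g_j=t+(\text{terms}\prec t)$, so $\LT(g_j)=t$. Pushing $g\equiv g_jf_j\pmod{\langle f_1,\dots,f_{j-1}\rangle}$ through the endomorphism $\theta_j$ of Proposition~\ref{prop:LJ}, and using there (as in its proof) that for grevlex $\LT(\theta_\ell(p))=\LT(p)$ whenever $\theta_\ell(p)\neq0$, together with the elementary fact that $\LT(\theta_\ell(\mathcal J))\subseteq\LT(\mathcal J)$ for a homogeneous ideal $\mathcal J$, one gets $\theta_j(g)\neq0$ (since $\LT(g)\in\mathcal{T}^j$) and then $\theta_j(g_j)\neq0$ (were it zero, $\LT(g)$ would fall in $\LT(\langle f_1,\dots,f_{j-1}\rangle)$); hence $t=\LT(\theta_j(g_j))\in\mathcal{T}^j$, and it only remains to exclude $x_j\mid t$.

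Here I would argue from the construction of the matrices. The extra ingredient is that, by Proposition~\ref{prop6} applied to $(f_1,\dots,f_{j-1})$ and to $(f_1,\dots,f_j)$, and the permutation-invariance of regularity for homogeneous systems (immediate from Proposition~\ref{prop:hilbertseries}), the sequences $\theta_{j-1}(f_1),\dots,\theta_{j-1}(f_{j-1})$ in $k[x_1,\dots,x_{j-1}]$ and $\theta_j(f_1),\dots,\theta_j(f_j)$ in $k[x_1,\dots,x_j]$ are regular, hence define zero-dimensional complete intersections; in particular $(f_1,\dots,f_m)$ is itself regular, so every matrix $\Mac{d}{j}$ here is of full rank (Theorem~\ref{thmf5}) and no row reduces to zero. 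Now, writing $d=\deg g$ (so $t$ has degree $d-d_j$), the row of signature $(j,t)$ is literally $t f_j$ inside $\Mac{d}{j}$, built up from $(j,1)$ by multiplying successively by the largest variable present. Assume $x_j$ --- or any larger variable --- divides $t$, take $x_k$ the largest variable of~$t$ (so $k\ge j$) and write $t=x_kt'$. Using the multiplicativity of grevlex, and Proposition~\ref{F5-crit} to rewrite any intermediate product $x_kvf_j$ whose signature monomial $x_kv$ meets $\LT(\langle f_1,\dots,f_{j-1}\rangle)$ as a combination of rows of smaller signature, one shows that after Gaussian elimination the leading term of the row $(j,t)$ equals $x_k$ times the leading term of the reduced row of signature $(j,t')$ at degree $d-1$. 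The latter lies in $\LT(\langle f_1,\dots,f_j\rangle)$ and has degree $d-1$, so $\LT(g)$ would then be a non-minimal generator, contradicting the first paragraph. Therefore $x_j\nmid t$, whence $\theta_{j-1}(g_j)\neq0$ and $t=\LT(\theta_{j-1}(g_j))\in\mathcal{T}^{j-1}$.

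The delicate point is the claim used in the last step: that Gaussian elimination does not push the leading term of $tf_j$ strictly below $x_k$ times the leading term of the degree-$(d-1)$ row of signature $(j,t')$. One must verify that no row of smaller signature carries that monomial as its leading term, and keep track of the fact that the matrices are only \emph{top}-reduced, so that the non-leading part of a surviving row is not in normal form. This is where I expect most of the effort of a complete proof to lie.
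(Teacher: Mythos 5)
Your argument is correct for $j\le i$ and for $\LT(g)\in\mathcal{T}^j$, and the detour through $\theta_j$ to conclude $t\in\mathcal{T}^j$ is plausible (if longer than needed). But the step you yourself flag as ``delicate'' --- that after Gaussian elimination the leading term of the row $(j,t)$ equals $x_k$ times the leading term of the reduced row of signature $(j,t')$ --- is exactly the point where the argument is missing, and it is not a routine verification: the matrices are only top-reduced, rows of smaller signature coming from $G_{j-1}$ may be interleaved, and there is no general reason the leading term could not drop further. So as written the proof of $t\in\mathcal{T}^{j-1}$ is incomplete.

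The paper closes the gap with a different decomposition and a different contradiction. Instead of writing $t=x_kt'$ with $x_k$ the largest variable of $t$, it writes $t=Xu$ where $X\neq1$ is of \emph{minimal degree} such that $((i,u),h)\in G_i$ for some $h$; by induction on the degree of $t$ (not just on $i$), one already knows $u\in\mathcal{T}^{i-1}$. The algorithm reduces the row of signature $(i,t)$, which modulo rows of strictly smaller signature is $Xh$, down to $g$; since $\LT(g)$ is not a multiple of $\LT(h)$ (line~\ref{ligne:22}), the leading term must drop, so there is some $g_k\in G_i$ and a monomial $\mu$ with $\mu\LT(g_k)=X\LT(h)$. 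Minimality of $X$ forces $X\mid\LT(g_k)$, hence $X\in\mathcal{T}^i$. If $x_i\mid X$, then $x_i\nmid\mu$, so $x_i\mid\LT(g_k)$, forcing $g_k$ to have signature $(i,\tau)$ with $\tau\in\mathcal{T}^{i-1}$ and $\mu\in\mathcal{T}^{i-1}$. But then the reduction uses the row of signature $(i,\mu\tau)$ with $\mu\tau\in\mathcal{T}^{i-1}$, and for grevlex $\mu\tau\succ Xu$ (since $x_i\mid Xu$ but $x_i\nmid\mu\tau$), which is \emph{not a valid row operation}. This contradiction on the signature order is the missing ingredient; it gives $x_i\nmid X$ directly, without needing any precise control over how far the leading term is pushed down by elimination --- precisely the control your version could not supply. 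You should replace your last paragraph by an argument of this kind: choose the decomposition so that the failure is visible in the signature order, rather than in the minimality of $\LT(g)$.
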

\begin{proof}
That~$j\le i$ is a consequence of the incremental nature of the algorithm. Also, since reductions do not change the index and only involve rows with smaller indices, one has that if~$((j,t),g)\in G_i$, then~$((j,t),g)\in G_j$. Thus by induction it is sufficient to consider~$((i,t),g)$. The result on~$\LT(g)$ is given by Proposition~\ref{prop:LJ} using the simultaneous Noether position hypothesis.
        
The property that~$t\in\mathcal{T}^{i-1}$ is more deeply related to
the way the algorithm~$F_5$ works. We prove it by induction on~$i$ and
on the degree of~$t$. For~$i=1$, the only element of the basis
is~$((1,1),f_1)$; it satisfies the property.  Let now $i>1$ and
$t\neq1$. We decompose $t$ as~$t=Xu$, where $X\neq1$ is of minimal
degree such that, for some polynomial $h$, $((i,u),h)\in
G_i$. Let $((i,u_1),g_1),\dots,((i,u_s),g_s)$ be the elements of~$G_i$
distinct from $h$ coming from rows with indices smaller than $(i,t)$
(i.e., $u_j<t$). Following Algorithm~$F_5$, the polynomial $g$ is
obtained from $Xh$ by reductions in the matrix by $h$ and the $g_j$'s,
and $\LT(Xh)\ge \LT(g)$. From line~\ref{ligne:22} in Algorithm
matrix-$F_5$ we see that $\LT(Xh)>\LT(g)$, so by definition of $h$
there exists an index $k$ and a monomial~$\mu$ with
 \[\mu\LT(g_k)=X\LT(h).\]

 By the minimality of~$X$, we have that $X\mid\LT(g_k)$. Since $g_k\in
 G_i$, by Proposition~\ref{prop:LJ} this implies~$X\in\mathcal{T}^i$
 and then also~$\mu\in\mathcal{T}^i$. We prove by contradiction
 that~$x_i\nmid X$.  Again by the minimality of~$X$, if~$x_i\mid X$,
 then~$x_i\nmid\mu$ and therefore~$x_i\mid \LT(g_k)$ and
 $\mu\in\mathcal T^{i-1}$.  By induction there exists a
 monomial~$\tau\in\mathcal{T}^{i-1}$ such that~$g_k$ has
 index~$(i,\tau)$. Now we have that~$((i,Xu),g)$ is reduced by a row
 indexed~$(i,\mu\tau)$ but this is a contradiction
 since~$\mu\tau\in\mathcal{T}^{i-1}$ is such that $\mu\tau\succ Xu$
 for the order grevlex and this would not be a valid row reduction.
\end{proof}

\section{Complexity Analysis of $F_5$}\label{sec2bis}
\subsection{Number of Polynomials}
The following theorem gives quantitative information on the structure
of a reduced Gr\"obner basis (which is independent of the algorithm
used to compute it). To the best of our knowledge, it has not been
given before.
\begin{theorem}\label{fg_gb} Let~$(f_1,\dots,f_m)$ be a homogeneous system for which the variables~$(x_1,\dots,x_n)$ are in simultaneous Noether position, with $d_1=\deg(f_1)\le\dots\le d_m=\deg(f_m)$. Let~$G_i$ be a reduced Gr\"obner basis  of $(f_1,\dots,f_i)$ for the grevlex monomial ordering for $1\le i\le m$.  Then the
number of polynomials of degree $d$ in 
$G_i$ whose leading term does not belong to $\LT(G_{i-1})$
is bounded by~$b_d^{(i)}$, where
 \begin{equation}\label{Bi}
  B_i(z)=\sum_{d = 0}^{\infty}{b_d^{(i)} z^d} = {z^{d_i}\prod_{k = 1}^{i - 1} \frac{1 - z^{d_k}}{1-z}}.
 \end{equation}
For fixed~$i$, let~$D_i=(d_1-1)+\dots+(d_{i-1}-1)$. Then the sequence~$h_d=b^{(i)}_{d+d_i}$, $0\le d\le D_i$ is positive, symmetric ($h_{D_i-d}=h_d$), unimodal ($h_0\le\dots\le h_{\lfloor{D_i/2}\rfloor}\ge\dots\ge h_{D_i}$) and log-concave ($h_d^2\ge h_{d-1}h_{d+1}$).
\end{theorem}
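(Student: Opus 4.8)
The plan is to establish the generating-function identity~\eqref{Bi} first, then read off the positivity, symmetry, unimodality, and log-concavity of the shifted coefficient sequence $h_d$ from the explicit product formula. For the identity, I would combine the Hilbert-series description of regular sequences with the structural information already available. By Proposition~\ref{cor:structure}, an element of $G_i$ whose leading term is new (not in $\LT(G_{i-1})$) has index $(i,t)$ with $t\in\mathcal{T}^{i-1}$ and leading term in $\mathcal{T}^i$; the number of such elements of degree $d$ is exactly $\dim\bigl(\LT(\mathcal{I}_i)_d\bigr)-\dim\bigl(\LT(\mathcal{I}_{i-1})_d\bigr)$ minus the contribution of those new leading terms that arise as multiples of old ones, so it is cleanest to count leading terms directly. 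Since $\LT$ of an ideal has the same Hilbert function as the ideal, $\sum_d \bigl(\HF_{k[x]/\mathcal{I}_{i-1}}(d)-\HF_{k[x]/\mathcal{I}_i}(d)\bigr)z^d = H_{\mathcal{I}_{i-1}}(z)-H_{\mathcal{I}_i}(z)$, and by Proposition~\ref{prop:hilbertseries} applied to the regular sequence $(f_1,\dots,f_i)$ (regular because the variables are in simultaneous Noether position, hence each prefix is in Noether position and the relevant prefix sequences are regular) this difference telescopes to $\prod_{k=1}^{i-1}(1-z^{d_k})(1-(1-z^{d_i}))/(1-z)^n = z^{d_i}\prod_{k=1}^{i-1}(1-z^{d_k})/(1-z)^n$. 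The subtlety is that this counts the dimension of the quotient of leading-term ideals, not the number of minimal new generators; the point of the $F_5$ structure (Proposition~\ref{cor:structure}, giving $t\in\mathcal{T}^{i-1}$, together with Proposition~\ref{prop:LJ}) is precisely that the new leading monomials involve only $x_1,\dots,x_i$ and that $\LT(\mathcal{I}_i)=\LT(\theta_i(\mathcal{I}_i))\cdot\langle x_{i+1},\dots,x_n\rangle$, which lets one replace the factor $1/(1-z)^n$ coming from the ambient monomials by $1/(1-z)^{i-1}$ for the generator count over the "new" variable $x_i$ after reducing modulo $x_{i+1},\dots,x_n$; carrying this reduction through yields the stated $B_i(z)=z^{d_i}\prod_{k=1}^{i-1}(1-z^{d_k})/(1-z)^{i-1}$, with the caveat that $b_d^{(i)}$ is an \emph{upper} bound because some new leading monomials may themselves be multiples of smaller new leading monomials.

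For the second part, after the shift $h_d=b^{(i)}_{d+d_i}$ the generating function becomes $\sum_d h_d z^d = \prod_{k=1}^{i-1}\frac{1-z^{d_k}}{1-z}=\prod_{k=1}^{i-1}(1+z+\dots+z^{d_k-1})$, a product of $i-1$ factors each of which is the generating polynomial of a discrete uniform distribution on $\{0,\dots,d_k-1\}$. Positivity is immediate since all coefficients of each factor are $1$; the top degree is $\sum_{k=1}^{i-1}(d_k-1)=D_i$; symmetry $h_{D_i-d}=h_d$ follows because each factor $1+z+\dots+z^{d_k-1}$ is palindromic and a product of palindromic polynomials is palindromic. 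Unimodality and log-concavity are the genuinely substantive points: each factor $1+z+\dots+z^{d_k-1}$ is log-concave with no internal zeros, and it is a classical fact (Keilson--Gerber, or Stanley's survey on log-concave sequences) that the product of two polynomials with nonnegative, log-concave, no-internal-zero coefficient sequences again has a log-concave, no-internal-zero coefficient sequence, hence by induction the $(i-1)$-fold product does too; log-concavity together with positivity (no internal zeros) implies unimodality. Alternatively one can invoke the probabilistic reading: $h$ (after normalisation) is the distribution of a sum of independent uniform variables, and such convolutions are known to be log-concave.

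The main obstacle I anticipate is the bookkeeping in the first part — pinning down exactly why $b_d^{(i)}$ as defined by the product $z^{d_i}\prod_{k=1}^{i-1}\frac{1-z^{d_k}}{1-z}$ (with the $1/(1-z)^{i-1}$, not $1/(1-z)^n$) is the right count and is an upper bound for the number of minimal new generators of degree $d$, rather than, say, the dimension count with $1/(1-z)^n$. This requires using both halves of Proposition~\ref{cor:structure} carefully: the restriction $t\in\mathcal{T}^{i-1}$ controls the signatures, and $\LT(g)\in\mathcal{T}^i$ together with Proposition~\ref{prop:LJ} lets one pass to the quotient by $\langle x_{i+1},\dots,x_n\rangle$ where the Hilbert series of the new contribution is $z^{d_i}\prod_{k=1}^{i-1}(1-z^{d_k})/(1-z)^{i-1}$ by regularity of the image sequence. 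The inequality (rather than equality) absorbs any new leading term that is redundant as an ideal generator. The combinatorial second part should then be essentially a citation plus a short induction, with no real difficulty beyond quoting the closure of log-concave no-internal-zero sequences under polynomial multiplication.
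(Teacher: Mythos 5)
Your treatment of the second part (positivity, palindromicity, unimodality, log-concavity via the product decomposition $\prod_{k=1}^{i-1}(1+z+\dots+z^{d_k-1})$, closed under polynomial multiplication, with unimodality following from log-concavity plus no internal zeros) is correct and is essentially the paper's argument, which simply cites Stanley's survey.

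The first part, however, has a genuine gap, and it is precisely the ``main obstacle'' you flag. Your telescoping of Hilbert series in $n$ variables gives $z^{d_i}\prod_{k=1}^{i-1}(1-z^{d_k})/(1-z)^n$, and the reduction you propose --- passing to the quotient by $\langle x_{i+1},\dots,x_n\rangle$, i.e.\ working in $k[x_1,\dots,x_i]$ --- would replace $(1-z)^n$ by $(1-z)^i$, \emph{not} $(1-z)^{i-1}$. The quantity with $(1-z)^i$ has coefficients that are the cumulative sums of the $b_d^{(i)}$, so it is a strictly weaker bound. No amount of ``carrying the reduction through'' closes this off-by-one, because the difference of the two Hilbert functions genuinely counts \emph{all} monomials of $\mathcal{T}^i_d$ in $\LT(\mathcal{I}_i)\setminus\LT(\mathcal{I}_{i-1})$, not just the minimal generators. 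The extra factor of $(1-z)$ does not come from the Hilbert-series picture at all; it comes from the signature structure, which you mention ($t\in\mathcal{T}^{i-1}$) but do not actually use in the count. The correct argument is to count signatures, not leading terms: writing a new element $g\in G_i$ as $g=g_if_i+\dots+g_1f_1$, Proposition~\ref{cor:structure} gives $\LT(g_i)\in\mathcal{T}^{i-1}$, Lemma~\ref{lemmaf5} gives that $g_i$ is reduced modulo $\langle f_1,\dots,f_{i-1}\rangle$, so $\LT(g_i)$ is a monomial of degree $d-d_i$ in the complement of $\LT(\theta_{i-1}(\mathcal{I}_{i-1}))$ inside $k[x_1,\dots,x_{i-1}]$; distinct new elements of $G_i$ have distinct such monomials. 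The number of these monomials is $\HF_{\theta_{i-1}\langle f_1,\dots,f_{i-1}\rangle}(d-d_i)$, whose generating function is $\prod_{k=1}^{i-1}(1-z^{d_k})/(1-z)^{i-1}$, giving the stated $B_i(z)$ after the shift by $z^{d_i}$. So the live variables are $x_1,\dots,x_{i-1}$ (through the signature), not $x_1,\dots,x_i$ (through the leading term), which is where your exponent is off by one.
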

\begin{example}
  In Example~\ref{ex:matrices}, the series $B_3(z) = z^2(1+z)^2 = z^2 +
  2z^3 + z^4$ gives exactly the number of polynomials of degree 2, 3 and 4 in $G_3\setminus G_2$.
\end{example}
\begin{proof}
The proof is by induction on~$i$. If~$i=1$ then by definition of the Noether position, the basis is reduced to one polynomial with leading term~$x_1^{d_1}$ so that in this case~$B_1(z)=z^{d_1}$ as expected. Assuming the property to hold for~$i-1$, we now prove it for~$i$. Consider~$g\in G_i$, which can be written
\begin{equation}\label{decomposition}
g=g_if_i+\dots+g_1f_1,
\end{equation}
for some polynomials~$g_i$. By Proposition~\ref{cor:structure}, we can
restrict our attention to~$g_i$ with~$\LT(g_i)$ belonging
to~$k[x_1,\dots,x_{i-1}]$.

Now if~$g\in G_i$ has degree~$d$ and does not belong to the Gr\"obner basis of~$\langle f_1,\dots,f_{i-1}\rangle$, the number of possible leading monomials of~$g_i$ in such a decomposition is bounded by the number of monomials of degree~$d-d_i$ in~$k[x_1,\dots,x_{i-1}]$ that are not leading terms of a polynomial in~$\langle f_1,\dots,f_{i-1}\rangle$, and this is precisely $\HF_{\theta_{i-1}{\langle f_1,\dots,f_{i-1}\rangle}}(d-d_i)$ with~$\theta_{i-1}$ as in Proposition~\ref{prop:LJ}. This is a bound on the dimension of a vector space containing these elements of~$G_i$. It is therefore also a bound on their possible number of leading terms, whence the result.

The properties of $h_d$ come from the fact that these properties are true for each of the coefficient sequences of the polynomials~$(1-z^{d_k})/(1-z)$ and are preserved by multiplication of these polynomials (see, e.g., \cite{Stanley89}).
\end{proof}

Note that since the series in Theorem~\ref{fg_gb} is actually a polynomial of degree
\[\delta=\sum_{j=1}^i{(d_j-1)}+1,\]
we deduce that this is also a bound on the highest degree of the elements in a reduced grevlex Gr\"obner basis, thus recovering~\cite[Ch.~3, Cor.~3.5]{LJ84}, and from there the result of~\cite{Lazard83} after a generic linear change of variables. 
\begin{corollary}\label{cor:boundD}For a system with variables in simultaneous Noether position, a bound on the number of operations in~$k$ required to compute a Gr\"obner basis for the grevlex ordering is obtained by taking~$D=\sum_{j=1}^m{(d_j-1)}+1$ in Proposition~\ref{propupper}.
\end{corollary}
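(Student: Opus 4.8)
The plan is to combine Theorem~\ref{fg_gb} with Proposition~\ref{propupper}. First I would record that the series $B_i(z)$ of Theorem~\ref{fg_gb} is in fact the polynomial $z^{d_i}\prod_{k=1}^{i-1}(1+z+\dots+z^{d_k-1})$, whose degree is
\[
d_i+\sum_{k=1}^{i-1}(d_k-1)=\sum_{j=1}^{i}(d_j-1)+1=:D_i .
\]
Hence Theorem~\ref{fg_gb} (taken with $i=m$) says that every element of a reduced grevlex Gr\"obner basis $G_m$ of $\mathcal{I}$ whose leading term is not a leading term of $G_{m-1}$ has degree at most $D_m=\sum_{j=1}^m(d_j-1)+1$.

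To upgrade this to a statement about \emph{all} elements of $G_m$, I would prove by induction on $i$ that every element of the reduced grevlex Gr\"obner basis $G_i$ of $\langle f_1,\dots,f_i\rangle$ has degree at most $D_i$; note that $D_{i-1}\le D_i$ since $d_i\ge 1$. The case $i=1$ is clear, as $G_1=\{f_1\}$ and $d_1=D_1$. For the inductive step, let $g\in G_i$; its leading term $\LT(g)$ is a minimal generator of the monomial ideal $\LT(\langle f_1,\dots,f_i\rangle)$. If $\LT(g)$ already lies in $\LT(\langle f_1,\dots,f_{i-1}\rangle)$, then a minimal generator of the larger monomial ideal that belongs to the smaller one is automatically a minimal generator of the smaller one, so $\LT(g)$ is the leading term of some element of the reduced basis $G_{i-1}$ and $\deg(g)=\deg\LT(g)\le D_{i-1}\le D_i$ by the induction hypothesis; otherwise $\LT(g)\notin\LT(G_{i-1})$ and the first paragraph gives $\deg(g)\le D_i$. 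This is exactly the degree bound for reduced grevlex Gr\"obner bases already announced after Theorem~\ref{fg_gb}, namely \cite[Ch.~3, Cor.~3.5]{LJ84}.

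Finally, since the reduced grevlex Gr\"obner basis of $\mathcal{I}$ has no element of degree exceeding $D:=D_m=\sum_{j=1}^m(d_j-1)+1$, the reduced row echelon forms of the Macaulay matrices $\Mac{d}{m}$ for $\min(d_1,\dots,d_m)\le d\le D$ suffice to reconstruct it, as recalled in Section~\ref{sec1}; plugging this value of $D$ into Proposition~\ref{propupper} then gives the announced complexity bound. There is no genuine obstacle here: the substance is entirely contained in Theorem~\ref{fg_gb}, and the only point needing a small verification is the ``inherited leading term'' case of the induction, which is elementary.
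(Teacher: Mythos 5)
Your proof is correct and follows essentially the same approach as the paper: the degree bound $D=\sum_{j=1}^m(d_j-1)+1$ is read off as the degree of the polynomial $B_m(z)$ from Theorem~\ref{fg_gb}, and then substituted into Proposition~\ref{propupper}. The paper states this in a single sentence before the corollary; you make explicit the small but genuine point it leaves implicit, namely that the theorem only bounds degrees of elements with \emph{new} leading terms, and you close the gap with the clean observation that a minimal generator of the larger monomial ideal lying in the smaller one is automatically a minimal generator of the smaller one, combined with the monotonicity $D_{i-1}\le D_i$.
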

The next section shows that the~$F_5$ algorithm achieves a smaller complexity.

\subsection{Upper Bound for $F_5$}
We now give a proof of Theorem~\ref{thm:nF5} from page~\pageref{thm:nF5}
using the $F_5$ criterion.
Note that the asymptotic character of this result is only relative
to~$n$: we obtain an actual bound for any fixed degree~$\delta$.

\begin{proof}  
The outline of the proof is as follows. First, we exploit the information on the shape of the Gr\"obner basis from Theorem~\ref{fg_gb} in order to get a good control over the number of operations. Following the structure of the algorithm we get a bound as a sum over~$i=1,\dots,m$, corresponding to the rows induced by each input polynomial~$f_i$, of sums over~$d$, corresponding to the columns induced by the monomials of a given degree. Next, we observe that the (bound on the) cost of all steps over~$i$ is bounded by that devoted to the last polynomial~$f_m$. The situation with respect to degrees is different: there is an intermediate degree where more work takes place. (In fact, we only prove that there is a degree where our bound dominates the other ones, but this behaviour can be observed in practice.) Then, we compute the asymptotic expansion of both this intermediate degree and the (bound on the) number of rows involved in that degree. The final result is obtained by injecting these expansions into the bounds.                                   
\par\smallskip\emph{Exact bound from the numbers of rows and columns in the top-reduction case.}
Arithmetic operations are only performed during the Gaussian
reductions.  If we perform only top-reductions, for each~$1\le i\le m$ and~$\delta\le d \le D$, by
Theorems~\ref{thmf5} and~\ref{fg_gb} there are at most~$b_{d}^{(i)}$
polynomials in~$\mathcal{M}_{d,i}\setminus\mathcal{M}_{d,i-1}$ that
need to be reduced, and they need to be reduced
by~$\tilde{\mathcal{M}}_{d,i-1}$. By Proposition~\ref{cor:structure}
the leading term of the result is in~$\mathcal{T}_d^i$, which implies
that at most~$\binom{i+d-1}{d}$ rows are involved in this reduction,
each row containing~$\binom{n+d-1}{d}$ columns. This gives the following bound on the number of arithmetic operations
\begin{equation}\label{nf5}
N_{F_5}=\sum_{i=1}^m\sum_{d=\delta}^D{b_{d}^{(i)}\binom{i+d-1}{d}\binom{n+d-1}{d}}.
\end{equation}  

\par\smallskip\emph{Bound on the number of polynomials in~$\mathcal{M}_{d,i}\setminus\mathcal{M}_{d,i-1}$.}
For an arbitrary~$d$ and~$i$, we have
\[b_d^{(i)}\le\frac{B_i(r)}{r^d},\qquad \mbox{for any } r>0\] which
follows from the positivity of the coefficients of~$B_i$
(see~\eqref{Bi}). This bound holds for arbitrary $r>0$;  it is
minimised by choosing for~$r$ a root of the derivative of the
right-hand side, which is equivalent (using the logarithmic
derivative) to taking~$r$ such that
 \[d=r\frac{B_i'(r)}{B_i(r)}.\] As
$B_i(r)=r^\delta\left(\frac{1-r^\delta}{1-r}\right)^i$ for equations
of identical degree $\delta$, this is equivalent to 
\begin{equation}\label{rofi}
  i = 1 +
  (d-\delta)\cdot \left(
\frac{\delta }{1-r^{-\delta}}-\frac 1{1-r^{-1}}\right)^{-1}.
\end{equation}
For~$d\ge \delta$, the right-hand side of this equation is a
differentiable function of~$r$, its derivative is negative, and it has
extreme values~$+\infty$ and~$\frac{d-1}{\delta-1}$. This is then a
positive decreasing function of~$r$. This shows that~\eqref{rofi}
defines~$r$ as a function~$r(i,d)$ for~$i\ge \frac{d-1}{\delta-1}$ and
$\delta\le d$, and this function is a positive, decreasing,
differentiable function of~$i$.

The same reasoning on the equation~$d=r\frac{B_i'(r)}{B_i(r)}$ shows
that~$r(i,d)$ is differentiable with respect to $d$, for~$\delta\le
d\le \deg B_i$, and that it is positive and increasing.

\par\smallskip\emph{Bound on the work on the~$i$th polynomial.}
A bound on the sequence summed in~\eqref{nf5} is now obtained by
bounding the \emph{differentiable} function of~$i$
\[\frac{B_i(r(i,d))}{r(i,d)^d}\binom{i+d-1}{d}\binom{n+d-1}{d}\]
for $1\le i \le m$ such that $b_d^{(i)}\neq 0$, i.e., $i\ge
\frac{d-1}{\delta-1}$. This is an increasing function of~$i$. Indeed, 
the last binomial does not depend on~$i$, the previous one is clearly increasing and the logarithmic derivative of the first
factor w.r.t. $i$ is
\begin{equation}
\frac{1-r(i,d)^\delta}{1-r(i,d)} = 1 + r(i,d) + \cdots + r(i,d)^{\delta-1}
\end{equation}
which is positive. Hence, we have the bound
\begin{equation}
  N_{F_5} \le m \sum_{d=\delta}^{D} \frac{B_m(r(m,d))}{r(m,d)^d}\binom{m+d-1}{d}\binom{n+d-1}{d}.
\end{equation}
(Intuitively, the most expensive part
of the computation is performed for $i=m$.)

\par\smallskip\emph{The most expensive degree~$d$.}
Consider now the summand as a function of $d$. Its logarithmic
derivative in $d$ for fixed~$r$ has a simple expression, which vanishes for $d$ root of the equation
\begin{equation}\label{eq_delta}
2\psi(d+m)-2\psi(d+1)=\log r(m,d),
\end{equation}
where $\psi$ is the logarithmic derivative of the $\Gamma$ function (see, e.g., \cite[Ch.~6]{AbSt73}). Thus we now have two equations, \eqref{eq_delta} and~\eqref{rofi} (with $i=m$), relating~$r,d,m$.

As a consequence of the functional equation of the $\Gamma$ function ($\Gamma(s+1)=s\Gamma(s)$), the left-hand side of~\eqref{eq_delta} can be rewritten using
the alternative expression
\[\psi(d+m)-\psi(d+1)=\frac{1}{d+1}+\dots+\frac1{d+m-1}.\]
It follows that Equation~\eqref{eq_delta} defines a unique positive,
differentiable function $d(m)$, such that the pair $(d(m), \rho(m))$
with the notation 
\[\rho(m):=r(m,d(m)))\] gives the solution
to~(\ref{rofi},\ref{eq_delta}) (for $i=m$). Moreover, we have
$\rho(m)\ge 1$. 

We have now isolated the most expensive step of the algorithm and basically bound all of them by it.
The total number of arithmetic operations in $k$ required by algorithm
matrix-$F_5$ is therefore bounded by
\begin{equation}\label{boundN}
  N_{F_5}\le
  m\,(m\,(\delta-1)+1-(\delta-1))\,\frac{B_m(\rho(m))}{\rho(m)^{d(m)}}
  \binom{m+d(m)-1}{d(m)}\binom{m+\ell+d(m)-1}{d(m)},\\
\end{equation}
where we have used the bound on~$D$ from Corollary~\ref{cor:boundD} and the hypothesis $n=m+\ell$.

\par\smallskip\emph{Asymptotic expansions.}
We now let $m\rightarrow\infty$ and obtain the asymptotic behaviour of both~$\rho$ and~$d$ simultaneously, before injecting into the bound above.

Rewriting~\eqref{rofi} yields
\begin{equation}
  \label{dofm}
  \frac{d(m)-\delta}{m-1} = 
\frac{\delta}{1-\rho(m)^{-\delta}} -\frac 1{1-\rho(m)^{-1}},
\end{equation}
from which we define
\begin{equation*}
  \lambda(m):=\frac{d(m)}m.
\end{equation*}
Now, since $\rho(m)\ge1$, the right-hand side of Equation~\eqref{dofm}
takes its values between $\frac{\delta-1}2$ and $\delta-1$, which
implies that $\lambda(m)$ is bounded as $m\to\infty$.

Thus we can compute the asymptotic behaviour of~\eqref{eq_delta}, using the classical expansion
\begin{equation}\label{asympt_psi}
\psi(x) = \log(x) -
\frac{1}{2x} - \sum_{n=1}^\infty \frac{B(2n)}{2n(x^{2n})},\qquad x\rightarrow+\infty
\end{equation}
where $B(n)$ is the $n$th Bernoulli number. This gives
\[\rho(m) =
\left(1+\lambda(m)^{-1}\right)^2
+O\left(\frac1m\right)
\]
We can then eliminate~$\rho(m)$ from~\eqref{dofm} asymptotically:
\[
\lambda(m)+O \left( \frac1{m} \right) =
\frac{\delta}{ 1-\left(1+\lambda(m)^{-1}\right) ^{-2\delta}}
-\frac{1}{1-\left(1+\lambda(m)^{-1}\right)^{-2}}+O \left(
  \frac1{m} \right)
\]
from which we see that as $m\to\infty$, $\lambda(m)$ tends to $\lambda_0$ the unique root
between $\frac{\delta-1}2$ and $\delta-1$ of
\[
\lambda_0 =
\frac{\delta}{ 1-\left(1+\lambda_0^{-1}\right) ^{-2\delta}}
-\frac{1}{1-\left(1+\lambda_0^{-1}\right)^{-2}}
\]
or equivalently
\begin{equation}\label{eq_lambda_0}
\left(1+\lambda_0^{-1}\right)^{2\delta} = \frac1{1-\delta\left(\frac{({\lambda_0}+1)^2-{\lambda_0}^2}{({\lambda_0}+1)^3-{\lambda_0}^3}\right)}.
\end{equation}
Now, Equation~\eqref{eq_delta} using~\eqref{asympt_psi}
gives $\rho$ as a bivariate formal expansion in $1/m$ and $\lambda-\lambda_0$,
with
\[\rho =
\left(1+\lambda_0^{-1}\right)^2
-\frac{2(\lambda_0+1)}{\lambda_0^3}(\lambda-\lambda_0) -
\frac{(\lambda_0+1)\;(2\;\lambda_0+1)}{\lambda_0^3}\frac1m +
\sum_{i,j\ge1} \rho_{i,j}(\lambda_0)\frac{(\lambda-\lambda_0)^i}{m^j}.\]
Injecting into Equation~\eqref{dofm} yields a bivariate formal power series
$\Phi(1/m,\lambda-\lambda_0)=0$ where
\begin{multline*}
  \Phi(1/m,\lambda-\lambda_0) =
\left( {\frac {3\,{{\lambda_0}}^{2}+3\,{\lambda_0}+2}{ \left( {\lambda_0}+1 \right) {\lambda_0}}}-{\frac { 2\,\left( 3\,{{\lambda_0}}^{2}+3\,{\lambda_0}+1 \right)}{{\lambda_0}\, \left( {\lambda_0}+1 \right)  \left( 2\,{\lambda_0}+1 \right) }} \delta\right) (\lambda-\lambda_0) +\\
 \left( {\frac {3\,{\lambda_0}^{3}+5\,{\lambda_0}^{2}+4\,\lambda_0+1}{\lambda_0\, \left( \lambda_0+1 \right) }}-{\frac { \left( 2\,{\lambda_0}^{2}+2\,\lambda_0+1 \right)}{\lambda_0\, \left( \lambda_0+1 \right) }} \delta  \right) \frac{1}{m}+O \left( \frac{\lambda-\lambda_0}{m} \right)
\end{multline*}
The asymptotic implicit function theorem (see, e.g., \cite{GerardJurkat1992}) implies that the asymptotic expansion of $\lambda-\lambda_0$ is given by the formal power series $S\in \mathbb C[[1/m]]$ such
that $\Phi(1/m,S(1/m)) = 0$. This series can be computed to arbitrary order, e.g., by indeterminate coefficients or Newton
iteration, and for instance we get
\begin{align*}
\lambda(m) &= {\lambda_0}-{\frac { \left( 2\,{\lambda_0}+1 \right)
      \left(  \left( 2\,{{\lambda_0}}^{2}+2\,{\lambda_0}+1 \right)
        {\delta}-3\,{{\lambda_0}}^{3}-5\,{{\lambda_0}}^{2}-4\,{\lambda_0}-1
      \right) }{ \left( 6\,{{\lambda_0}}^{2}+6\,{\lambda_0}+2 \right)
      {\delta}-6\,{{\lambda_0}}^{3}-9\,{{\lambda_0}}^{2}-7\,{\lambda_0}-2}}\frac1m+O
  \left( {\frac1{m^2}} \right),\\
  \rho(m) &={\frac { \left( {\lambda_0}+1 \right) ^{2}}{{{\lambda_0}}^{2}}}-{
    \frac { \left( {\lambda_0}+1 \right) ^{2} \left( 2\,{\lambda_0}+1
      \right)  \left( 2\,{\delta}+1 \right) }{{{\lambda_0}}^{2}
      \left(  \left( 6\,{{\lambda_0}}^{2}+6\,{\lambda_0}+2 \right) {\delta}-6\,{{\lambda_0}
        }^{3}-9\,{{\lambda_0}}^{2}-7\,{\lambda_0}-2 \right) }}\frac1m+O \left( {\frac1{m^2}} \right).
\end{align*}

\par\smallskip\emph{Final bound.} We now inject these estimates into~\eqref{boundN} and obtain the final result:
\[
N_{F_5}\le A\,m\,\left( \frac{\left(\frac{{\lambda_0}+1}{{\lambda_0}}\right)^{2\delta}-1}
  {\frac{1}{{\lambda_0}^2}-\frac{1}{({\lambda_0}+1)^2}} 
\right)^m\left(1+O\left(\frac1m\right)\right),
\]
where $A$ is a constant depending only on $\delta$, ${\lambda_0}$ and $\ell$, whose value is given by
\[
A = \frac{1-\delta^{-1}}{2 {\pi }}\cdot
\frac{\left(1+\lambda_0^{-1}\right)^3-1}{(1+\lambda_0)^{1+\ell}}.
\]

\par\smallskip\emph{Inequalities.}
The proof of Theorem~\ref{thm:nF5} is concluded by showing that the expression abbreviated~$B(\delta)$ in the theorem remains within the interval~$[\delta^3,3\delta^3]$.

The lower bound is obtained by observing that, in view of~\eqref{eq_lambda_0},
\begin{align*}
B-\delta^3&=\frac{\delta\lambda_0^2(\lambda_0+1)^2}{1-\delta+3\lambda_0-2\delta\lambda_0+3\lambda_0^2}-\delta^3\\
&=\frac{\delta(\delta-\lambda_0)(\delta-1-\lambda_0)(\lambda_0+\delta+2\delta\lambda_0+\lambda_0^2)}{1-\delta+3\lambda_0-2\delta\lambda_0+3\lambda_0^2}\\
&=\frac{\delta(\delta-\lambda_0)(\delta-1-\lambda_0)(\lambda_0+\delta+2\delta\lambda_0+\lambda_0^2)}
{((1+\lambda_0)^3-\lambda_0^3)(1+\lambda_0^{-1})^{-2\delta}}
\end{align*}
and this is nonnegative since $\lambda_0$ is smaller than $\delta-1$.

The upper bound is obtained in two steps: first $B(\delta)/\delta^3$ is shown to be increasing with~$\delta$ and next its limit as $\delta\rightarrow\infty$ is computed. Equation~\eqref{eq_lambda_0} shows that~$\lambda_0$ is a differentiable function of~$\delta$ and thus so is~$B$. An expression for~$\lambda_0'(\delta)$ is obtained by differentiating~\eqref{eq_lambda_0}. Injecting this expression into the derivative of~$B$ and simplifying gives the logarithmic derivative of $B(\delta)/\delta^3$:
\[\frac{B'(\delta)}{B(\delta)}-\frac3\delta=\frac{(3\lambda_0^{2}+3\lambda_0+1)\ln((1+\lambda_0^{-1})^2)-3(2\lambda_0+1)}{ \left( 2{ 
\lambda_0}+1 \right) {\delta}}.
\]
The numerator depends on~$\lambda_0$ only. Its positivity is obtained by using the lower bound $\ln(1+x)\ge x-x^2/2+x^3/3-x^4/4$ for $x\in(0,1)$, leading to a polynomial that is positive as soon as $\lambda_0>1.4$. Since $\lambda_0\ge(\delta-1)/2$, the monotonicity of~$B(\delta)/\delta^3$ is then established by checking the values for the cases $\delta=2,3,4$ and indeed, we obtain the approximations $2.45$, $2.66$, $2.73$
that conclude this part.

The limit of~$B(\delta)/\delta^3$ is obtained in a way similar to the asymptotic expansions above. We first consider the asymptotic behaviour of~$\lambda_0$ as~$\delta\rightarrow\infty$. Setting~$\lambda_0=c\delta$ and taking the asymptotic expansion of~\eqref{eq_lambda_0} yields
\[e^{2/c}+O(1/\delta)=\frac{3c}{3c-2}+O(1/\delta).\]
The limiting value of~$c$ is therefore the solution of the equation given by the leading term. It can be expressed in terms of the Lambert~$W$ function as
\[\lim_{\delta\rightarrow\infty}\frac{\lambda_0}{\delta}=\frac{2}{3+W(-3e^{-3})}\simeq0.708858.\]
In terms of this value of the limit of~$c$, a direct computation gives
\[\lim_{\delta\rightarrow\infty}\frac{B(\delta)}{\delta^3}=\frac{-4}{W(-3e^{-3})(3+W(-3e^{-3}))^2}\simeq2.81405669.\]
This concludes the proof that~$B(\delta)\le 3\delta^3$, with a more precise estimate in place of the factor~3.
\end{proof}
\noindent Note that full asymptotic expansions of all the parameters can be derived along the same lines.

\section{Practical results}
\label{sec3}
In order to estimate the accuracy or the lack of preciseness of our complexity
result, we performed actual Gröbner bases computations for quadratic and
cubic systems of $n$ equations in $n$ unknowns, in simultaneous
Noether position, with dense coefficients.  These results have been
obtained on a PC\ (laptop) with $4$ GB of RAM.  The
$F_{5}$ algorithm has been implemented in the C language
within the FGb software~\cite{F10c} and we used this implementation to
compute the Gr\"obner basis and to count the exact number of
arithmetic operations (multiplications of integers modulo
$p=65521$). These results show that the estimate of the number of
polynomials in the Gröbner basis (Eq.~\eqref{Bi}, Theorem~\ref{fg_gb})
and the asymptotic bound of $N_{F_5}$ (Eq.~\eqref{eq:asymptNF5},
Theorem~\ref{thm:nF5}) are very precise, but that the quantity
$N_{F_5}$ (Equation~\eqref{nf5}) could be sharpened. This will be done
in a future work.

\par\smallskip\emph{Accuracy of the asymptotic estimate of $N_{F_5}$.}
We plot in Figure~\ref{fig:asympt} the values of $N_{F_5}$ computed
from Eq.~\eqref{nf5} and its asymptotic estimate given in
Theorem~\ref{thm:nF5}, Eq.~\eqref{eq:asymptNF5}. This
shows that the asymptotic bound for $N_{F_5}$ is accurate.
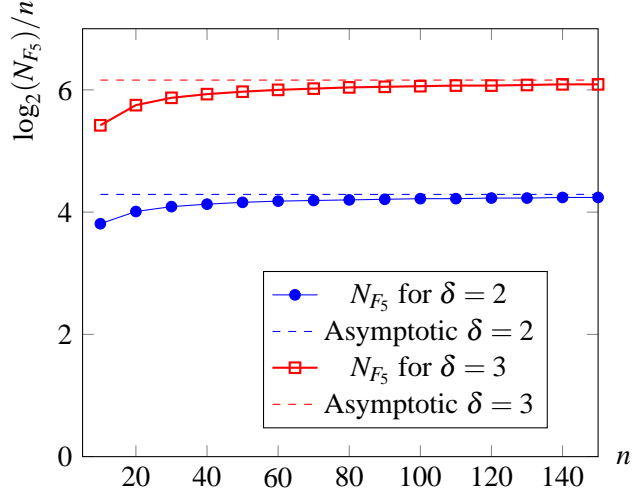
\begin{figure}[h]
\centering
\begin{tikzpicture}
\begin{axis}[
xmin=5,
xmax=150,
ymin=0,
ymax=7,
xlabel style={at={(1.05,0.05)},anchor=south}, 
ylabel style={at={(0.07,1.09)},anchor=east}, 
xlabel={\(n\)},
ylabel={\(\log_2(N_{F_{5}})/n\)},
cycle list={%
{blue,mark=*},
{blue,dashed},
{red,mark=square,thick},
{red,dashed},
{,
mark options={fill=brown!40},
mark=otimes*}},
legend style={
at={(0.35,0.25)},
anchor=west
}]
\addplot coordinates {
(10., 3.81) (20., 4.01) (30., 4.09) (40., 4.13) (50., 4.16) (60., 4.18) (70., 4.19) (80., 4.20) (90., 4.21) (100., 4.22) (110., 4.22) (120., 4.23) (130., 4.23) (140., 4.24) (150., 4.24)};
\addplot coordinates {
(10, 4.29)  (200, 4.29)};
\addplot coordinates {
(10., 5.42) (20., 5.75) (30., 5.87) (40., 5.93) (50., 5.97) (60., 6.00) (70., 6.02) (80., 6.04) (90., 6.05) (100., 6.06) (110., 6.07) (120., 6.07) (130., 6.08) (140., 6.09) (150., 6.09)
};
\addplot coordinates {
(10, 6.16)  (200, 6.16)};
\legend{$N_{F_{5}}\text{ for }\delta=2$,$\text{Asymptotic }\delta=2$,$N_{F_{5}}\text{ for }\delta=3$,$\text{Asymptotic }\delta=3$}
\end{axis}
\end{tikzpicture}
\caption{\label{fig:asympt} Comparison of \(N_{F_{5}}\) with its asymptotic estimate}
\end{figure}

\par\smallskip\emph{Estimation of $b_d^{(i)}$ in Theorem~\ref{fg_gb}, Equation~\eqref{Bi}.}
We compare the bound $b_d^{(i)}$, the number of polynomials computed
by the matrix \(F_5\) algorithm, and the number of polynomials in the
reduced Gröbner basis. It has been shown that \(F_5\) computes
``redundant'' polynomials that do not belong to the reduced Gröbner
basis (see, e.g.,~\cite{EdGaPe2011}). Moreover, \(F_5\) sometimes
reduces less than $b_d^{(i)}$ polynomials. But experimentally, this
difference is small. When all the equations have the same degree
\(d_i=\delta\), the total number of
polynomials 
in our estimation~\eqref{Bi} becomes:
\begin{displaymath}
\text{Polys}_{F_5}=\sum_{i=1}^{m} \sum_{d=\delta}^{D} b_{d}^{(i)}=\sum_{i=1}^{m}  \delta^{i-1}= \frac{\delta^{m}-1}{\delta-1}.
\end{displaymath}
Figure~\ref{fig:exp4} shows that this bound is very close to the actual
number of polynomials computed by the matrix \(F_5\) algorithm.
\begin{figure}[h]
\centering
\begin{tikzpicture}
\begin{axis}[
xmin=5,
xmax=12,
ymin=0,
ymax=12,
xlabel style={at={(1.05,0.05)},anchor=south}, 
ylabel style={at={(0.07,1.09)},anchor=east}, 
xlabel={\(n\)},
ylabel={\(\log_2(\text{Polys}_{F_5})\)},
cycle list={%
{blue,mark=*},
{blue,dashed},
{red,dashed},
{red,mark=square,thick},
{,
mark options={fill=brown!40},
mark=otimes*}},
legend style={
at={(0.35,0.25)},
anchor=west
}]
\addplot coordinates {
(5, 4.90) (6, 5.93) (7, 6.94) (8, 7.91) (9,8.90) (10,9.87) (11,10.8) (12,11.7)};
\addplot coordinates {
(5, 4.95) (6, 5.97) (7, 6.98) (8, 7.99) (9, 9) (10, 10) (11, 11) (12, 12)};
\addplot coordinates {
(5, 6.92) (6, 8.51) (7,10.1) (8,11.7) (9,13.3) (10,14.9)}; 
\addplot coordinates {
(5, 6.88) (6, 8.44) (7, 10.0)};
\node[pin=-65:{$\textcolor{blue}{\delta=2}$}] at (axis cs:10,10.3) {};
\node[pin=145:{$\textcolor{red}{\delta=3}$}] at (axis cs:7,9.5) {};

\legend{\begin{footnotesize}Exact Number of Polynomials\end{footnotesize} ,\footnotesize{Bound} $ (\delta^{m}-1)/(\delta-1)$ }
\end{axis}
\end{tikzpicture}
\caption{\label{fig:exp4}Comparison of the exact/approximated
number of polynomials computed by \(F_{5}\) }
\end{figure}
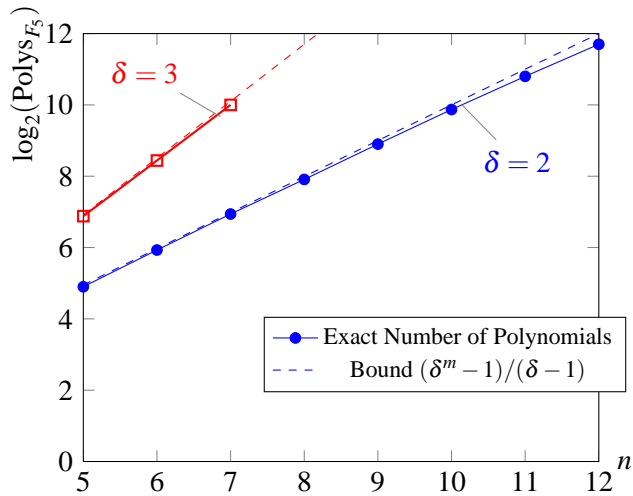

\par\smallskip\emph{Estimation of $N_{F_5}$ in Equation~\eqref{nf5}.}
In Tables~\ref{tab:quad} and~\ref{tab:cub} we report for each \(n\)
the quantities \(\text{top } F_5\) and \(N_{F_5}\), where top \(F_5\)
is the experimental number of multiplications for the matrix version
of \(F_5\) using only top reductions as described in
Section~\ref{sec:desc_F5}, 
and 
\(N_{F_{5}}\) is given by formula \eqref{nf5}. 
These values are represented graphically on Figures~\ref{fig:exp2}
page~\pageref{fig:exp2} and~\ref{fig:exp3} page~\pageref{fig:exp3}.
\begin{table}[h]
   \centering
\begin{tabular}{|l|c|c|c|}
\hline
$n$ & top \(F_5\) &  full \(F_5\) &  \(N_{F_5}\)\\
\hline
7 & \(2^{19.83} =2^{2.83n}\)&\(2^{20.78}=2^{2.97n}\) & \(2^{25.6}=2^{3.65n}\)\\
8 & \(2^{22.95} =2^{2.87n}\)&\(2^{23.14}=2^{ 2.89n}\) & \(2^{29.7}=2^{3.72n}\)\\
9 & \(2^{26.19} =2^{2.91n}\)&\(2^{25.39}=2^{ 2.82n}\)&\(2^{33.9}=2^{3.77n}\)\\
10& \(2^{29.38} =2^{2.94n}\)&\(2^{27.94}=2^{ 2.79n}\)&\(2^{38.1}=2^{3.81n}\)\\
11& \(2^{32.65} =2^{2.97n}\)&\(2^{30.46}=2^{ 2.77n}\)&\(2^{42.3}=2^{3.84n}\)\\
12& \(2^{35.90} =2^{2.99n}\)&\(2^{33.20}=2^{ 2.77n}\)&\(2^{46.4}=2^{3.87n}\)\\
13&                   &\(2^{35.86}=2^{2.76n}\)&\(2^{50.7}=2^{3.9n}\)\\
14&                   &\(2^{38.70}=2^{2.76n}\)&\(2^{54.9}=2^{3.92n}\)\\
15&                   &\(2^{41.52}=2^{2.79n}\)&\(2^{59.1}=2^{3.94n}\)\\
16&                   &\(2^{44.43}=2^{2.78n}\)&\(2^{63.3}=2^{3.96n}\)\\
\hline
\end{tabular}
\caption{\label{tab:quad}Quadratic equations (\(\delta=2\))}
\end{table}

\begin{figure}[h]
\centering
\begin{tikzpicture}
\begin{axis}[
xmin=7,
xmax=16,
ymin=0,
ymax=5,
xlabel style={at={(1.05,0.05)},anchor=south}, 
ylabel style={at={(0.07,1.09)},anchor=east}, 
xlabel={\(n\)},
ylabel={\(\log_2(F_5)/n\)},
cycle list={%
{blue,mark=*},
{blue,dashed},
{red,mark=square,thick},
{brown!60!black,mark=+,thick},
{green!90,thick},
{,
mark options={fill=brown!40},
mark=otimes*}},
legend style={
at={(0.35,0.25)},
anchor=west
}]
\addplot coordinates {
(7., 3.65) (8., 3.72) (9., 3.77) (10., 3.81) (11., 3.84) (12., 3.87) (13., 3.90) (14., 3.92) (15., 3.94) (16., 3.96) (17., 3.97) (18., 3.99) (19., 4.00) (20., 4.01)};
\addplot coordinates {
(7., 4.29)  (20., 4.29)};
\addplot coordinates {
(7., 2.83) (8, 2.87) (9,2.91) (10,2.94) (11,2.97) (12,2.99)
};
\addplot coordinates {
(7., 2.97) (8, 2.89) (9,2.82) (10,2.79) (11,2.77) (12,2.77) (13,2.76)
(14,2.76) (15,2.79) (16,2.78)
};
\legend{$N_{F_{5}}$,$\text{Asymptotic }N_{F_5}$,top
$F_5$,full $F_5$}
\end{axis}
\end{tikzpicture}
\caption{\label{fig:exp2} Experiments with \(\delta=2\)}
\end{figure}

\begin{table}[h]
   \centering
\begin{tabular}{|l|c|c|c|}
\hline
n & top \(F_5\) &  full \(F_5\) & \(N_{F_5}\)\\
\hline
5 & \(2^{20.06}=2^{4.01n}\)&\(2^{22.01}=2^{4.40n}\) & \(2^{24.2}=2^{4.84n}\)\\
6 &\(2^{24.93}=2^{4.16}\)&\(2^{25.85}=2^{4.31n}\) & \(2^{30.1}=2^{5.02n}\)\\
7 &\(2^{29.87}=2^{4.27}\)&\(2^{29.78}=2^{4.25n}\) & \(2^{36.1}=2^{5.16n}\)\\
8 &\(2^{34.86}=2^{4.36}\)&\(2^{33.98}=2^{4.25n}\) & \(2^{42.1}=2^{5.26n}\)\\
9 &\(2^{39.88}=2^{4.43n}\)&\(2^{38.39}=2^{4.27n}\) & \(2^{48.15}=2^{ 5.35n}\)\\ 
10&&\(2^{41.91}=2^{4.29n}\) & \(2^{54.19}=2^{ 5.42n}\)\\ 
\hline
\end{tabular}
\caption{\label{tab:cub}Cubic equations (\(\delta=3\))}
\end{table}

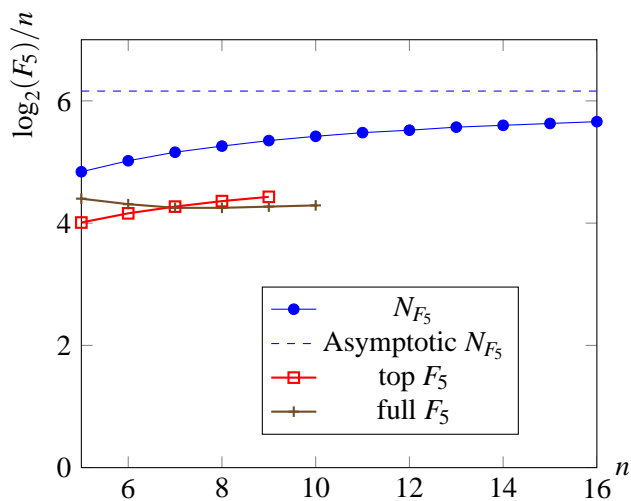
\begin{figure}[h]
\centering
\begin{tikzpicture}
\begin{axis}[
xmin=5,
xmax=16,
ymin=0,
ymax=7,
xlabel style={at={(1.05,0.05)},anchor=south}, 
ylabel style={at={(0.07,1.09)},anchor=east}, 
xlabel={\(n\)},
ylabel={\(\log_2(F_5)/n\)},
cycle list={%
{blue,mark=*},
{blue,dashed},
{red,mark=square,thick},
{brown!60!black,mark=+,thick},
{,
mark options={fill=brown!40},
mark=otimes*}},
legend style={
at={(0.35,0.25)},
anchor=west
}]
\addplot coordinates {
(5., 4.84) (6., 5.02) (7., 5.16) (8., 5.26) (9., 5.35) (10., 5.42) (11., 5.48) (12., 5.52) (13., 5.57) (14., 5.60) (15., 5.63) (16., 5.66) (17., 5.69) (18., 5.71) (19., 5.73) (20., 5.75)};
\addplot coordinates {
(5., 6.16)  (20., 6.16)};
\addplot coordinates {
(5,4.01) (6,4.16) (7., 4.27) (8, 4.36) (9, 4.43) 
};
\addplot coordinates {
(5,4.40) (6,4.31) (7., 4.25) (8, 4.25) (9, 4.27) (10, 4.29)
};

\legend{$N_{F_{5}}$,$\text{Asymptotic }N_{F_5}$,top
$F_5$,full $F_5$}
\end{axis}
\end{tikzpicture}
\caption{\label{fig:exp3} Experiments with \(\delta=3\)}
\end{figure}

This time, we observe a significant gap between the exact values and
our bound. Remember that the bound in Equation~\eqref{nf5}
is computed assuming that $b_d^{(i)}$ polynomials in $\Mac d
i\backslash \Mac d {i-1}$ are reduced (which is exact for dense
polynomials); that they are reduced by
$|\mathcal T_d^i|$ rows, and that each of these rows has $|\mathcal
T_d^n|$ non-zero coefficients.  But in practice each of the
$b_d^{(i)}$ rows needs not be reduced by all of the $|\mathcal T_d^i|$
rows, and some of the $|\mathcal T_d^n|$ entries in each row are
zero. For instance, a row with index $(4,x_3)$ contains no monomial
depending only on $x_1,x_2$, and hence needs not be reduced by the
$|\mathcal T_3^{2}|$ rows whose leading term depend only on $x_1$ and
$x_2$. Another reason is that each row in degree $d$ comes from a row
in degree $d-1$, and has as many non-zero coefficients (before
reduction) as the row in degree $d-1$. Hence the matrices $\Mac d i$
are sparse matrices.
This phenomenon is amplified when the degree $d$ and the number of
variables $n$ grow, and explains the gap between our bound and the
experiments. We did not find  a way to capture this phenomenon in
our bound yet, also because we compute a worst-case bound that has to
capture all the non-generic behaviors.

\par\smallskip\emph{Top reduction vs full reduction.}
Tables~\ref{tab:quad} and~\ref{tab:cub} also contain the quantity full
\(F_5\), which is the exact number of multiplications for Algorithm
\(F'_5\) as described in the original article~\cite{Faugere02}; this
algorithm is a matrix version of \(F_5\) where a full reduction is
used over the rows, instead of only a top reduction: for each row,
reductions are performed not only for the leading term, but also for
all the remaining coefficients.  Experimental results show that this
strategy becomes quickly efficient as $n$ grows, which is due to the
fact, observed empirically, that the rows in the matrices are  sparser with this algorithm
than with  top reduction only.

\section*{Acknowledgement}
 This work was partly supported by the HPAC grant of the French National Research Agency (HPAC ANR-11-BS02-013).

 We thank the anonymous referees for their careful reading and helpful
 comments.
\bibliographystyle{alpha}
\bibliography{paper}
\end{document}